\newif\ifams
\let\authoranonymous\relax
\renewcommand{\paragraph}{\subsubsection*}
\newenvironment{claimproof}{\begin{proof}}{\end{proof}}
\newcommand{\claimqedhere}{\qedhere}
\def\NAT@spacechar{~}\makeatother
\DeclareFontFamily{U}{mathb}{}
\DeclareFontShape{U}{mathb}{m}{n}{<-5.5> mathb5 <5.5-6.5> mathb6 
<6.5-7.5> mathb7 <7.5-8.5> mathb8 <8.5-9.5> mathb9 <9.5-11> mathb10 
<11-> mathb12}{}
\DeclareSymbolFont{mathb}{U}{mathb}{m}{n}
\DeclareMathSymbol{\sqsubsetneq}{\mathbin}{mathb}{"88} 
\tikzstyle{square}=[rounded corners=4pt,regular polygon,regular polygon
\tikzstyle{lsquare}=[rounded corners=4pt,draw,minimum
\tikzstyle{every node}=[font=\footnotesize]
\tikzstyle{every edge}=[draw,>=stealth',shorten >=1pt,thin]
\providecommand{\urlstyle}[1]{}
\providecommand{\doi}[1]{\href{https://doi.org/#1}{\nolinkurl{doi:#1}}}
\newcommand{\eqby}[1]{\mathrel{\raisebox{-.1ex}{\ensuremath{\stackrel{\raisebox{-.25ex}{\scalebox{.5}{\upshape\textrm{#1}}}}{=}}}}}
\newcommand{\eqdef}{\eqby{def}}
\newcommand{\fdim}[1]{\mathrm{fdim}\,#1}
\newcommand{\fin}{\mathrm{fin}}
\newcommand{\idealto}{\leadsto}
\newtheorem{theorem}{Theorem}[section]
\newtheorem{lemma}[theorem]{Lemma}
\newtheorem{fact}[theorem]{Fact}
\newtheorem{corollary}[theorem]{Corollary}
\newtheorem{proposition}[theorem]{Proposition}
\newtheorem{claim}[theorem]{Claim}
\theoremstyle{definition}
\newtheorem*{problem}{Problem}
\newtheorem{definition}[theorem]{Definition}
\newtheorem{example}[theorem]{Example}
\newtheorem{remark}[theorem]{Remark}
\crefname{section}{\defaultS\!}{\defaultS\!}
\Crefname{section}{Section}{Sections}
\crefname{subsection}{\defaultS\!}{\defaultS\!}
\Crefname{subsection}{Section}{Sections}
\crefname{subsubsection}{\defaultS\!}{\defaultS\!}
\Crefname{subsubsection}{Section}{Sections}
\crefname{theorem}{Theorem}{theorems}
\Crefname{theorem}{Theorem}{Theorems}
\crefname{lemma}{Lemma}{lemmata}
\Crefname{lemma}{Lemma}{Lemmata}
\crefname{fact}{Fact}{facts}
\Crefname{fact}{Fact}{Facts}
\crefname{corollary}{Corollary}{corollaries}
\Crefname{corollary}{Corollary}{Corollaries}
\crefname{proposition}{Proposition}{propositions}
\Crefname{proposition}{Proposition}{Propositions}
\crefname{claim}{Claim}{claims}
\Crefname{claim}{Claim}{Claims}
\crefname{definition}{Definition}{definitions}
\Crefname{definition}{Definition}{Definitions}
\crefname{example}{Example}{examples}
\Crefname{example}{Example}{Examples}
\crefname{remark}{Remark}{remarks}
\Crefname{remark}{Remark}{Remarks}
\crefname{figure}{Figure}{figures}
\Crefname{figure}{Figure}{Figures}
\crefname{table}{Table}{tables}
\Crefname{table}{Table}{Tables}
\crefname{equation}{equation}{equations}
\Crefname{equation}{Equation}{Equations}
\def\keywords{\smallskip\noindent\textsc{Keywords.}~~}
\def\vec#1{\mathchoice{\mbox{\boldmath$\displaystyle#1$}}
{\mbox{\boldmath$\textstyle#1$}}
{\mbox{\boldmath$\scriptstyle#1$}}
{\mbox{\boldmath$\scriptscriptstyle#1$}}}
\begin{document}
\newcommand{\mydim}{d}
\title[On the Length of Strongly Monotone Descending Chains over \texorpdfstring{$\+N^{\mydim}$}{ℕᵈ}]{On the Length of Strongly Monotone\\Descending Chains over {\boldmath $\mathbbb{\+N}^{\mydim}$}}
\author[S.~Schmitz and L.~Sch\"utze]{Sylvain Schmitz$^{1}$ and Lia Sch\"utze$^2$}
\address{$^1$~Universit\'e Paris Cité, CNRS, IRIF, Paris, France}
\address{$^2$~Max Planck Institute for Software Systems (MPI-SWS), Kaiserslautern, Germany}
\begin{abstract}
  A recent breakthrough by \citeauthor*{kunnemann23} (ICALP 2023) bounds
the running time for the coverability problem in $d$-dimensional
vector addition systems under unary encoding to $n^{2^{O(d)}}$,
improving on \citeauthor{rackoff78}'s $n^{2^{O(d\lg d)}}$ upper bound
(\textit{Theor.\ Comput.\ Sci.} 1978), and provides conditional
matching lower bounds.

In this paper, we revisit \citeauthor*{lazic21}' ``ideal view'' of the
backward coverability algorithm (\textit{Inform.\ Comput.} 2021) in
the light of this breakthrough.  We show that the controlled strongly
monotone descending chains of downwards-closed sets over $\+N^d$ that
arise from the dual backward coverability algorithm
of \citeauthor*{lazic21} on $d$-dimensional unary vector addition
systems also enjoy this tight $n^{2^{O(d)}}$ upper bound on their
length, and that this also translates into the same bound on the running
time of the backward coverability algorithm.

Furthermore, our analysis takes place in a more general setting than
that of \citeauthor{lazic21}, which allows to show the same results
and improve on the \ComplexityFont{2EXPSPACE} upper bound derived
by \citeauthor*{benedikt17} (LICS~2017) for the coverability problem
in invertible affine nets.
 
  \keywords{Vector addition system, coverability, well-quasi-order,
    order ideal, affine net}
\end{abstract}
\maketitle

\section{Introduction}
\paragraph{Well-Quasi-Orders\nopunct} (wqo for short) are a
notion from order theory~\cite{kruskal72,SS2012} that has proven
very effective in many areas of mathematics, logic,
combinatorics, and computer science in order to establish finiteness
statements.  For instance, in the field of formal verification, they
provide the termination arguments for the generic algorithms
for \emph{well structured transition
systems}~\cite{abdulla00,finkel01}, notably the \emph{backward
coverability algorithm} for deciding safety
properties~\cite{arnold78,abdulla00,finkel01}.

In full generality, one cannot extract complexity bounds from
wqo-powered termination proofs.  Nevertheless, in an algorithmic
setting, one can ``instrument'' wqos by considering
so-called \emph{controlled sequences}~\cite{SS2012,hdr/Schmitz17}, and
new tight complexity upper bounds for wqo-based algorithms now appear
on a regular basis~\cite[e.g.,][for a few recent
examples]{icalp/Schmitz19,bala20,bala21,bala22,guillou23}.

Those complexity upper bounds are however astronomically high, and
sometimes actually way too high for the problem at hand.  An
emblematic illustration of this phenomenon is the backward
coverability algorithm for vector addition systems (VAS), which was
shown to run in double exponential time by \citet{bozzelli11} based on
an original analysis due to \citet{rackoff78}: the corresponding
bounds over the wqo~$\+N^d$ are Ackermannian~\cite{figueira11}.

\paragraph{Descending Chains.}  One way pioneered by \citet{lazic21}
to close such complexity gaps while retaining some of the wide
applicability of wqos and well structured transition systems is to
focus on the descending chains of downwards closed sets over the wqo
at hand.  Indeed, one of the equivalent characterisations of wqos is
the \emph{descending chain condition}~\cite{kruskal72,SS2012}, which
guarantees that those descending chains are finite.

In themselves, descending chains are no silver bullet: e.g., the
controlled descending chains over $\+N^d$ are also of Ackermannian
length~\cite[Thm.~3.10]{lazic21}.  Nevertheless, these chains
sometimes exhibit a form of ``monotonicity,'' which yields vastly
improved upper bounds.  When applied to a \emph{dual} version of the
backward coverability algorithm in well structured transition systems,
this allows to recover the same double exponential time upper bound as
in~\citep{bozzelli11,rackoff78} for the VAS coverability problem,
along with tight upper bounds for coverability in several VAS
extensions.  The same framework was also the key to establishing tight
bounds for coverability in $\nu$-Petri nets~\cite{LazicS16}.  As a
further testimony to the versatility of the approach,
\citeauthor*{benedikt17} use it in~\citep{benedikt17} to derive
original upper bounds for problems on invertible polynomial automata
and invertible affine nets, in a setting that is not strictly speaking
one of well structured transition systems.

\paragraph{Fine-grained Bounds for VAS Coverability.} The coverability
problem in VAS is well-known to be \EXPSPACE-complete, thanks
to \citeauthor{rackoff78}'s \citeyear{rackoff78} upper bound matching
a \citeyear{lipton76} lower bound by \citet{lipton76}.  The main
parameter driving this complexity is the dimension of the system: the
problem is in pseudo-polynomial time in fixed dimension~$d$; more
precisely, \citeauthor{rackoff78}'s analysis yields a $n^{2^{O(d\lg
d)}}$ deterministic time upper bound for $d$-dimensional VAS encoded
in unary~\citep{rosier86}, by proving the same bound on the length of
a covering execution of minimal length.  Here, there is a discrepancy
with the $n^{2^{\Omega(d)}}$ lower bound on the length of that
execution in \citeauthor{lipton76}'s construction---a discrepancy that
was already highlighted as an open problem in the early 1980's
by \citet{mayr82}, and settled in the specific case of reversible
systems by \citet{koppenhagen00}.  The upper bounds of
both \citet{bozzelli11} and \citet{lazic21} on the complexity of the
backward coverability algorithm inherit from \citeauthor{rackoff78}'s
$n^{2^{O(d\lg d)}}$ bound and suffer from the same discrepancy.

This was the situation until \citeauthor*{kunnemann23} showed an
$n^{2^{O(d)}}$ upper bound on the length of minimal covering
executions of unary encoded $d$-dimensional VAS,
matching \citeauthor{lipton76}'s lower
bound~\citep[Thm.~3.3]{kunnemann23}.  This directly translates into a
deterministic algorithm with the same upper bound on the running
time~\citep[Cor.~3.4]{kunnemann23}.  Furthermore, assuming the
exponential time hypothesis, \citeauthor{kunnemann23} also show that
there does not exist a deterministic $n^{o(2^d)}$ time algorithm
deciding coverability in unary encoded $d$-dimensional
VAS~\citep[Thm.~4.2]{kunnemann23}.

\paragraph{Thinness.} The improved upper bound relies on the notion
of a \emph{thin} vector in~$\+N^d$~\citep[Def.~3.6]{kunnemann23}
(somewhat reminiscent of the ``extractors'' of \citet{leroux13}).  The
proof of \citep[Thm.~3.3]{kunnemann23} works by induction on the
dimension~$d$. By splitting a covering execution of minimal length at
the first non-thin configuration, \citeauthor{kunnemann23} obtain a
prefix made of distinct thin configurations (which must then be of
bounded length), and a suffix starting from a configuration with some
components high enough to be disregarded, hence that can be treated as
an execution in a VAS of lower dimension, on which the induction
hypothesis applies.

\paragraph{Contributions.}  In this paper, we show that the improved
$n^{2^{O(d)}}$ upper bound of \citet{kunnemann23} also applies to the
number of iterations of the backward coverability algorithm for
$d$-dimensional VAS encoded in unary (see \cref{th-vas}).  In order to
do so, one could reuse the approach of \citet{bozzelli11} to lift the
improved bound from the length of minimal covering executions to the
running time of the backward coverability algorithm, but here we aim
for the generality of the framework of~\cite{lazic21}.

Our main contribution is thus to show in \cref{sec-thin} that the
upper bounds on the length of strongly monotone controlled descending
chains of downwards closed sets over~$\+N^d$---which include those
constructed during the running of the backward coverability algorithm
for VAS---can be improved similarly (see \cref{th-thin}) when focusing
on a suitably generalised notion of thinness.  As a byproduct, we
observe that thinness is an inherent property of such chains
(see \cref{cor-thin}), rather than an \emph{a priori} condition
that---almost magically---yields the improved bound.

We apply our results to the coverability problem in vector addition
systems in \cref{sec-vas}---thus providing as promised an alternative
to applying \citeauthor{bozzelli11}'s approach
to \citeauthor{kunnemann23}'s results---and show that the backward
coverability algorithm runs in time~$n^{2^{O(d)}}$
(see \cref{cor-bc-vas}) and is therefore conditionally optimal
by~\citep[Thm.~4.2]{kunnemann23}.

As a further demonstration of the versatility of our results, we show
in \cref{sec-an} how to apply them to invertible affine nets, a
generalisation of vector addition systems introduced
by \citet{benedikt17}, and a good showcase for our techniques.  We
obtain the same bounds for their coverability problem as in the case
of vector addition systems (see \cref{th-an,cor-an-bc}), and thereby
improve on the \ComplexityFont{2EXPSPACE} upper bound
of~\citep{benedikt17} by showing that the problem is
actually \ComplexityFont{EXPSPACE}-complete (see \cref{cor-an}).
Along the way, we will see that the improved upper bounds also
apply for other VAS extensions, for which Rackoff's proof scheme had
been successfully adapted (see \cref{rk-abvas,rk-sian}), namely
strictly increasing affine nets~\citep{bonnet12}, branching
VAS~\citep{demri12}, and alternating VAS~\citep{courtois14}.
 
\section{Well-Quasi-Orders and Ideals}\label{sec-ideals}
We start by introducing the necessary background on
well-quasi-orders, descending chains, and order ideals.

\paragraph{Well-Quasi-Orders.}  A \emph{quasi-order} $(X,{\leq})$
comprises a set~$X$ and a transitive reflexive relation
${\leq}\subseteq X\times X$.  For a subset $S\subseteq X$,
its \emph{downward closure} is the set of elements smaller or equal to
some element in~$S$, i.e., ${\downarrow}S\eqdef\{x\in X\mid\exists
y\in S\mathbin.x\leq y\}$.  When $S=\{y\}$ is a singleton, we note
${\downarrow}y$ for this set.  A subset $S\subseteq X$
is \emph{downwards-closed} if $S={\downarrow}S$.
A \emph{well-quasi-order} is a quasi-order $(X,{\leq})$ such that all
the \emph{descending chains}
\begin{equation}\label{eq-desc-seq}
  D_0 \supsetneq D_1 \supsetneq D_2 \supsetneq\cdots
\end{equation}
of downwards-closed subsets $D_k\subseteq X$ are
finite~\citep{kruskal72,SS2012}.

Conversely, the \emph{upward closure} of a subset $S\subseteq X$ is
${\uparrow}S\eqdef\{x\in X\mid\exists y\in S\mathbin.y\leq x\}$, and
$S$ is \emph{upwards-closed} if $S={\uparrow}S$.  The complement
$X\setminus D$ of a downwards-closed set~$D$ is upwards-closed (and
conversely), hence wqos have the \emph{ascending chain condition} for
chains $U_0\subsetneq U_1\subsetneq\cdots$ of upwards-closed sets:
they are necessarily finite.
Furthermore, any upwards-closed set~$U$ over a wqo has a \emph{finite
basis}~$B$ such that $U={\uparrow}B$~\citep{kruskal72,SS2012}; without
loss of generality, we can take the elements of~$B$ to be minimal and
mutually incomparable in~$U$.

A well-studied wqo is $(\+N^d,{\sqsubseteq})$ the set of
$d$-dimensional vectors of natural numbers along with the
component-wise (aka product) ordering~\citep{dickson13};
see \cref{fig-vas} for an illustration of a descending chain over
$\+N^2$, which happens to be produced by the backward coverability
algorithm for a vector addition system~\citep[Ex.~3.6]{lazic21}.

\begin{figure}[tbp]
  \captionsetup[subfigure]{labelformat=empty,justification=justified,singlelinecheck=false}
  \centering
  \newcommand\onestep{.34}
  \newcommand\oneinf{9*\onestep+.2}
  \newcommand{\point}[3]{\draw[fill=#1!40,draw=#1!60!black] (#2*\onestep,#3*\onestep) circle (2pt)}
  \newcommand\onegrid{\draw[step=\onestep,gray!40,very thin] (0,0) grid (\oneinf,5*\onestep+.2);
    \draw (0,0) -- (0,5*\onestep+.2);
    \draw (0,0) -- (\oneinf,0);
    \foreach \i in {1,...,5}
      \draw[dotted,gray!60] (\oneinf,\i*\onestep) -- (\oneinf+\onestep,\i*\onestep);
    \draw[dotted,black] (\oneinf,0) -- (\oneinf+\onestep,0);
}
  \begin{subfigure}[t]{.33\textwidth}\centering
    \begin{tikzpicture}
\onegrid
    \foreach \i in {0,...,10}
      \foreach \j in {0,...,4}
        \point{violet}{\i}{\j};
    \end{tikzpicture}
    \caption{\footnotesize$D_0=\{(\omega,4)\}$}
  \end{subfigure}~
  \begin{subfigure}[t]{.33\textwidth}\centering
    \begin{tikzpicture}
\onegrid
    \foreach \i in {0,1}
      \point{violet!55!blue}{\i}{4};
    \foreach \i in {0,...,10}
      \foreach \j in {0,...,3}
        \point{violet!55!blue}{\i}{\j};    
    \end{tikzpicture}
    \caption{\footnotesize$D_1=\{(1,4),(\omega,3)\}$}
  \end{subfigure}~
  \begin{subfigure}[t]{.33\textwidth}\centering
    \begin{tikzpicture}
\onegrid
    \foreach \i in {0,1}
      \point{blue!90!cyan}{\i}{4};
    \foreach \i in {0,...,3}
      \point{blue!90!cyan}{\i}{3};
    \foreach \i in {0,...,10}
      \foreach \j in {0,...,2}
        \point{blue!90!cyan}{\i}{\j};   
    \end{tikzpicture}
    \caption{\footnotesize$D_2=\{(1,4),(3,3),(\omega,2)\}$}
  \end{subfigure}\\
  \begin{subfigure}[t]{.33\textwidth}\centering
    \begin{tikzpicture}
\onegrid
    \foreach \i in {0,1}
      \point{blue!40!cyan}{\i}{4};
    \foreach \i in {0,...,3}
      \point{blue!40!cyan}{\i}{3};
    \foreach \i in {0,...,5}
      \point{blue!40!cyan}{\i}{2};
    \foreach \i in {0,...,10}
      \foreach \j in {0,...,1}
        \point{blue!40!cyan}{\i}{\j}; 
    \end{tikzpicture}
    \caption{\footnotesize$D_3=\{(1,4),(3,3),(5,2),(\omega,1)\}$}
  \end{subfigure}~
  \begin{subfigure}[t]{.33\textwidth}\centering
    \begin{tikzpicture}
\onegrid
    \foreach \i in {0,1}
      \point{blue!20!cyan}{\i}{4};
    \foreach \i in {0,...,3}
      \point{blue!20!cyan}{\i}{3};
    \foreach \i in {0,...,5}
      \point{blue!20!cyan}{\i}{2};
    \foreach \i in {0,...,7}
      \point{blue!20!cyan}{\i}{1};
    \foreach \i in {0,...,10}
      \point{blue!20!cyan}{\i}{0}; 
    \end{tikzpicture}
    \caption{\footnotesize$D_4=\{(1,4),(3,3),(5,2),\\\phantom{D_4=\{}(7,1),(\omega,0)\}$}
  \end{subfigure}~
  \begin{subfigure}[t]{.33\textwidth}\centering
    \begin{tikzpicture}
\onegrid
    \foreach \i in {0,1}
      \point{cyan!30!white}{\i}{4};
    \foreach \i in {0,...,3}
      \point{cyan!30!white}{\i}{3};
    \foreach \i in {0,...,5}
      \point{cyan!30!white}{\i}{2};
    \foreach \i in {0,...,7}
      \point{cyan!30!white}{\i}{1};
    \foreach \i in {0,...,9}
      \point{cyan!30!white}{\i}{0}; 
    \end{tikzpicture}
    \caption{\footnotesize$D_5=\{(1,4),(3,3),(5,2),\\\phantom{D_5=\{}(7,1),(9,0)\}$}
  \end{subfigure}

\caption{\label{fig-vas}A descending chain $D_0\supsetneq
  D_1\supsetneq\cdots\supsetneq D_5$ over~$\+N^2$~\citep[Ex.~3.6]{lazic21}.}
\end{figure}

\paragraph{Order Ideals.}  An \emph{order ideal} of $X$ is a downwards-closed
subset $I\subseteq X$, which is \emph{directed}: it is non-empty, and
if $x,x'$ are two elements of $I$, then there exists $y$ in~$I$ with
$x\leq y$ and $x'\leq y$.  Alternatively, order ideals are characterised as
the \emph{irreducible} non-empty downwards-closed sets of~$X$: an
order ideal is a non-empty downwards-closed set~$I$ with the property that,
if $I\subseteq D_1\cup D_2$ for two downwards-closed sets~$D_1$
and~$D_2$, then $I\subseteq D_1$ or $I\subseteq D_2$.

Over a wqo $(X,{\leq})$, any downwards-closed set $D\subseteq X$ has a
canonical decomposition as a finite union of order ideals
$D=I_1\cup\cdots\cup I_n$, where the $I_j$'s are mutually incomparable
for inclusion~\citep{bonnet75,goubault20}.  We write $I\in D$ if $I$
is an order ideal appearing in the canonical decomposition of~$D$, i.e., if
it is a maximal order ideal included in~$D$.  Then $D\subseteq D'$ if and only if,
for all~$I\in D$, there exists $I'\in D'$ such that $I\subseteq I'$.

\paragraph{Effective Representations over \ifams$\+N^d$\else{\boldmath$\mathbbb{\mathbb N}^d$}\fi.}  Over the wqo
$(\+N^d,{\sqsubseteq})$, the order ideals are exactly the sets of the form
${\downarrow}\vec{v}\cap\+N^d$ where $\vec v$ ranges over
$\+N_\omega^d\eqdef(\+N\uplus\{\omega\})^d$, where $\omega$ is a new
top element~\cite{goubault20}.
From here on, we will abuse notations and identify an order ideal $I$
of $\+N^d$ with the vector $\vec{v}$ in~$\+N^d_\omega$ such that
$I={\downarrow}\vec{v}\cap\+N^d$.    See for instance the 
decompositions in \cref{fig-vas}.

Let us introduce some notations for
the sets of \emph{infinite} and \emph{finite components} of~$I$,
namely
\begin{align}
  \omega(I)&\eqdef\{1\leq i\leq
  d\mid I(i)=\omega\}\;,&
  \fin(I)&\eqdef
  \{1\leq i\leq
  d\mid I(i)<\omega\}\;,\\
  \intertext{along with its \emph{dimension} and \emph{finite
  dimension}, respectively defined as}
  \dim I&\eqdef |\omega(I)|\;,&
  \fdim I&\eqdef |\fin(I)|\;.
\end{align}
Note that $\fin(I)=\{1,\dots,d\}\setminus\omega(I)$ and $\fdim
I=d-\dim I$.  For instance, the order ideal $I=(\omega,4)$ in the
decomposition of~$D_0$ in \cref{fig-vas} satisfies $\omega(I)=\{1\}$
and $\dim I=1$.

\medskip
The order ideals of~$\+N^d$, when represented as vectors
in~$\+N^d_\omega$, are rather easy to 
manipulate~\citep{goubault20} --- and thus so are the downwards-closed
subsets of~$\+N^d$
when represented as finite sets of vectors
in~$\+N^d_\omega$.  For instance,
\begin{itemize}
  \item $I\subseteq I'$ (as subsets of~$\+N^d$) if and only if $I\sqsubseteq I'$ (as vectors
     in~$\+N^d_\omega$) --- which incidentally entails $\omega(I)\subseteq\omega(I')$ and therefore
     $\dim I\leq \dim I'$; also note that, if $I\subseteq I'$ and $\dim I=\dim I'$,
    then $\omega(I)=\omega(I')$;
  \item the intersection of two order ideals is again an order ideal,
    represented by the vector $I\wedge I'$ defined by
    $(I\wedge I')(i)\eqdef\min(I(i),I'(i))$ for all $1\leq i\leq d$;
  \item the complement of an order ideal~$I$ is the upwards-closed set
    $\bigcup_{i\in\fin(I)}{\uparrow}\big((I(i)+1)\cdot\vec e_i\big)$,
    where $\vec e_i$ denotes the unit vector with ``$1$'' in
    coordinate~$i$ and ``$0$'' everywhere else.
\end{itemize}

\paragraph{Proper Ideals and Monotonicity.}  If $D\supsetneq D'$, then
there must be an order ideal $I\in D$ such that $I\not\in D'$.  Coming back
to a descending chain $D_0\supsetneq D_1\supsetneq\cdots\supsetneq
D_\ell$, we then say that an order ideal $I$ is \emph{proper} at
step~$k$, for $0\leq
k<\ell$, if $I\in D_k$ but $I\not\in D_{k+1}$; at each step~$0\leq
k<\ell$, there must be at least one proper order ideal.  In \cref{fig-vas},
$(\omega,4)$ is proper at step~0, and more
generally $(\omega,4-k)$ is the only proper order ideal
at step~$0\leq k<5$.

It turns out that the descending chains arising from some algorithmic
procedures, including the backward coverability algorithm for VAS,
enjoy additional relationships between their proper order ideals.
Over $(\+N^d,{\sqsubseteq})$, we say
that a descending chain $D_0\supsetneq D_1\supsetneq\cdots$ is
\begin{itemize}
\item \emph{strongly monotone}~\citep{novikov99,benedikt17} if, whenever an
  ideal $I_{k+1}$ is proper at some step~$k+1$,
  then there exists $I_{k}$ proper at step~$k$ such that $\dim
  I_{k+1}\leq\dim I_k$, and
\item in particular \emph{$\omega$-monotone}~\citep{lazic21} if, whenever an
  ideal $I_{k+1}$ is proper at some step~$k+1$,
  then there exists $I_{k}$ proper at step~$k$ such that
  $\omega(I_{k+1})\subseteq \omega(I_k)$.
\end{itemize}
The descending chain depicted in \cref{fig-vas} is
$\omega$-monotone---and thus strongly monotone\mbox{---} with
$\omega((\omega,4-(k+1)))\subseteq\omega((\omega,4-k))$
for all $4>k\geq 0$.

\paragraph{Controlled Sequences.}  While guaranteed to be
finite, descending chains
over a wqo can have arbitrary length.  Nevertheless, their length can
be bounded under additional assumptions.  We define the
\emph{size} of a downwards-closed subset of~$\+N^d$ and of an order ideal
of~$\+N^d$ as
\begin{align}
  \|D\|&\eqdef \max_{I\in D}\|I\|\;,  & \|I\|&\eqdef\max_{i\in\fin(I)}I(i)\;.
\end{align}
In \cref{fig-vas}, $\|D_0\|=\|D_1\|=\|D_2\|=4$, $\|D_3\|=5$, $\|D_4\|=7$,
and $\|D_5\|=9$.

Given a \emph{control} function $g{:}\,\+N\to\+N$, which will always be monotone
(i.e., $\forall x\leq y.g(x)\leq g(y)$) and expansive (i.e., $\forall x.x\leq
g(x)$) along with an \emph{initial size} $n_0\in\+N$, we say that a
descending chain $D_0\supsetneq D_1\supsetneq\cdots$ over $\+N^d$ is
\ifams(asymptotically) \fi \emph{$(g,n_0)$-controlled} if, for all $k\geq 0$,
\begin{equation}
 \|D_k\|\leq g^k(n_0)
\end{equation}
where $g^k(n_0)$ is the $k$th iterate of $g$ applied to
$n_0$~\citep{hdr/Schmitz17}.  In particular, $\|D_0\|\leq n_0$
initially.  In \cref{fig-vas}, the descending chain is
$(g,4)$-controlled for $g(x)\eqdef x+1$.
 
\section{Main Result}\label{sec-thin}
In this section, we establish a new bound on the length of controlled
strongly monotone descending sequences.  This relies on a 
generalisation of the notion of \emph{thinness} from
\citet[Def.~3.6]{kunnemann23} (see \cref{sub-thin-def}), before we can
apply thinness in the setting of strongly monotone descending
chains and prove our main result in \cref{sub-thin-lem}.

\subsection{Thinness}\label{sub-thin-def}
Fix a control function~$g$, an initial size~$n_0$, and a
dimension~$d\geq 0$.  Define inductively the bounds on sizes
$(N_{i})_{0\leq i\leq d}$ and lengths $(L_{i})_{0\leq i\leq d}$ as
follows\begin{align}
  N_{0} &\eqdef n_0\;,&
  N_{i+1} &\eqdef g^{L_{i}+1}(n_0)\;,\label{eqdef-Ni}\\
  L_{0} &\eqdef 0\;,&
  L_{i+1} &\eqdef L_{i}+\prod_{1\leq j\leq i+1}\hspace*{-.9em}(d-j+1)(N_{j}+1)\;.\label{eqdef-Li}
\end{align}
Beware the abuse of notation, as the bounds above depend on $(g,n_0)$
and~$d$, but those will always be clear from the context.

\begin{remark}[Monotonicity of $(N_i)_{0\leq i\leq d}$ and $(L_{i})_{0\leq i\leq d}$]\label{NL-mono}
By definition, for all $0\leq i<j\leq d$, $0\leq L_{i} < L_{j}$,
and because $g$ is assumed monotone expansive, $n_0\leq N_{i}\leq
N_{j}$.~\hfill\qedsymbol
\end{remark}

The following definition generalises \citep[Def.~3.6]{kunnemann23} to
handle order ideals and an arbitrary control function and initial size.
\begin{definition}[Thin order ideal]\label{def-thin} Let $(g,n_0)$ be a control function
  and initial size and $d>0$ a dimension.  An order ideal $I$ of~$\+N^d$ is
  \emph{thin} if there exists a bijection~$\sigma\colon\fin(I)
  \to\{1,\dots,\fdim I\}$ such that, for all $i\in\fin(I)$, $I(i)\leq
  N_{\sigma(i)}$.
\end{definition}
Observe that that, if $I'$ is thin, $I\subseteq I'$, and $\dim I=\dim
I'$, then $I$ is thin.
\begin{remark}[Number of thin order ideals]\label{eq-card-thin}
  There cannot be more than $\binom{d}{i}\cdot i!\cdot\prod_{1\leq
    j\leq i} (N_{j}+1)=\prod_{1\leq j\leq i}(d-j+1)(N_{j}+1)$
  distinct thin order ideals of finite dimension~$i$.  As will become
  apparent in the proofs, this is what motivates the
  definition in~\eqref{eqdef-Li}.
  \ifams\par
  Furthermore, if we let
    $\mathrm{Idl}^{\mathsf{thin}}(\+N^d)$ denote the set of thin order
    ideals of $\+N^d$, there is only one
  thin order ideal of finite dimension~$0$---namely
    $(\omega,\dots,\omega)$---, and
    \begin{align*} |\mathrm{Idl}^{\mathsf{thin}}(\+N^d)|&\leq 1+ \sum_{1\leq
  i\leq d}\prod_{1\leq j\leq i}(d-j+1)(N_j+1)\\ &=1+\sum_{1\leq i\leq
  d}(L_i-L_{i-1})\\
  &=1+L_d-L_0=1+L_d\;.\\[-3em]\end{align*}\bigskip\fi\hfill\qedsymbol
\end{remark}

\subsection{Thinness Lemma}\label{sub-thin-lem}

The crux of our result is the following lemma. 

\begin{lemma}[Thinness]\label{lem-thin}
  Consider a $(g,n_0)$-controlled strongly monotone descending
  chain~$D_0 \supsetneq D_1\supsetneq\cdots$ of downwards-closed
  subsets of~$\+N^d$. If $I_\ell$ is a proper order ideal at some
  step~$\ell$, then $I_\ell$ is thin and $\ell\leq
  L_{\fdim I_\ell}$.
\end{lemma}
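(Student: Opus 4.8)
The plan is to argue by induction on $\ell$, the step at which $I_\ell$ becomes proper, with a nested induction (or equivalently a case analysis) on $\fdim I_\ell$. The base cases ($\ell = 0$ and $\fdim I_\ell = 0$) are immediate: any ideal proper at step $0$ satisfies $\|I_0\| \le n_0 = N_0$, so it is thin (the bijection $\sigma$ is forced), and $0 \le L_{\fdim I_0}$ trivially since all $L_i \ge 0$; and if $\fdim I_\ell = 0$ then $I_\ell = (\omega,\dots,\omega)$, which is vacuously thin, and we must show $\ell \le L_0 = 0$, i.e. the full-dimensional ideal can only be proper at step $0$ — this should follow because once $(\omega,\dots,\omega)$ leaves $D_k$ it can never reappear in a descending chain, and if it is in $D_0$ it equals $\+N^d$, forcing $D_0 = \+N^d$, and strong monotonicity then prevents any higher-dimension ideal from being proper later, pinning the one occurrence at step $0$.

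For the inductive step, suppose $I_\ell$ is proper at step $\ell \ge 1$ with $m \eqdef \fdim I_\ell > 0$. By strong monotonicity there is an ideal $I_{\ell-1}$ proper at step $\ell - 1$ with $\dim I_{\ell-1} \le \dim I_\ell$, i.e. $\fdim I_{\ell-1} \ge m$. Apply the induction hypothesis to $I_{\ell-1}$: it is thin and $\ell - 1 \le L_{\fdim I_{\ell-1}}$. The plan now splits on whether $\fdim I_{\ell-1} = m$ or $\fdim I_{\ell-1} > m$. In the "stable" case $\fdim I_{\ell-1} = m$, I would show that the number of steps over which the finite dimension of the chosen proper-ideal witnesses can stay equal to $m$ is bounded: at each such step the proper ideal of finite dimension $m$ is thin, hence drawn from a pool of size at most $\prod_{1 \le j \le m}(d-j+1)(N_j+1) = L_m - L_{m-1}$ by \cref{eq-card-thin}; and a proper ideal, once removed from the chain, cannot return (it is removed from $D_{k+1}$ and the chain is descending), so each thin ideal of finite dimension $m$ is "used up" at most once as a witness. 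This caps the length of any maximal run of finite-dimension-$m$ witnesses by $L_m - L_{m-1}$, and combined with the bound $\ell' \le L_{m'}$ at the last earlier step $\ell'$ where the witness had finite dimension $m' > m$ (necessarily $m' \ge m$, and by monotonicity of $L$, $L_{m'} \ge L_m$ — wait, we need $\le$, so in fact we descend: the witness dimensions are nonincreasing going backward in time, hence nondecreasing forward... I must be careful here) we get $\ell \le L_m$. The key inequality $L_{i+1} = L_i + \prod_{1\le j \le i+1}(d-j+1)(N_j+1)$ is exactly engineered so that "previous bound $L_{m-1}$ or $L_{m'}$" plus "one full run of thin-$m$ ideals" lands inside $L_m$.

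For thinness of $I_\ell$ itself, I would argue: writing $I = I_\ell$ and using the control condition, $\|I_\ell\| \le g^\ell(n_0)$. Combined with $\ell \le L_{m-1} + (\text{run length so far})$ one shows each finite component is bounded by the appropriate $N_{\sigma(i)}$. More precisely, the finite components of $I_\ell$ should be split according to which "new" finiteness appeared relative to the witness $I_{\ell-1}$: the coordinates that were already finite in all recent witnesses inherit small bounds $N_1,\dots,N_{m-1}$ from thinness of those witnesses (using the observation right after \cref{def-thin} that shrinking an ideal within the same dimension preserves thinness), while the at most one newly-finite coordinate gets bounded by $g^{L_{\fdim I_{\ell-1}}+1}(n_0) = N_{\fdim I_{\ell-1}+1} \ge N_m$ via \eqref{eqdef-Ni} and the control. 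Assembling the bijection $\sigma$ from these pieces gives thinness. The main obstacle, I expect, is bookkeeping the correspondence between time-steps and the growing set of finite coordinates — formalizing "the finite dimension of the witness can only increase by jumps, and within a plateau the witnesses are distinct thin ideals" — and making the two inductions (on $\ell$ and on $\fdim$) interlock cleanly without circularity; the arithmetic with $N_i, L_i$ is then routine given the definitions \eqref{eqdef-Ni}--\eqref{eqdef-Li}.
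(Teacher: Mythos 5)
Your length-bound half is on the right track but has the key inequality backwards: strong monotonicity says $\dim I_{k+1}\leq\dim I_k$, so the witness at step $\ell-1$ satisfies $\fdim I_{\ell-1}\leq\fdim I_\ell=m$, not $\fdim I_{\ell-1}\geq m$. Consequently your case split ``$\fdim I_{\ell-1}=m$ or $\fdim I_{\ell-1}>m$'' and the phrase ``the last earlier step where the witness had finite dimension $m'>m$'' point the wrong way; with the correct orientation, a maximal run of proper witnesses of finite dimension $m$ is preceded (going backward) by a proper witness of \emph{strictly smaller} finite dimension, the induction hypothesis gives $L\leq L_{m-1}$ there, and adding the run length $\leq L_m-L_{m-1}$ from \cref{eq-card-thin} yields $\ell\leq L_m$ exactly as the definition \eqref{eqdef-Li} intends. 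You flag this confusion yourself but never resolve it; once fixed, your counting of distinct proper thin ideals in a constant-finite-dimension run is essentially the paper's length argument.

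The genuine gap is in establishing thinness of $I_\ell$. You propose to inherit component bounds from the thinness of the strong-monotonicity witness $I_{\ell-1}$, but strong monotonicity compares only the \emph{numbers} $\dim I_\ell\leq\dim I_{\ell-1}$: it gives neither $I_\ell\subseteq I_{\ell-1}$ nor any relation between $\omega(I_\ell)$ and $\omega(I_{\ell-1})$, so the observation after \cref{def-thin} (shrinking within equal dimension preserves thinness) is not applicable and no bound on individual finite components transfers from $I_{\ell-1}$ to $I_\ell$. Moreover ``at most one newly-finite coordinate'' is unjustified: the finite dimension can jump by any $f\geq 1$ in a single step. The paper needs a separate device precisely here (\cref{cl-thin}): since $D_k\subseteq D_{k-1}$, one builds an \emph{inclusion} chain $I_k\subseteq I_{k-1}\subseteq\cdots\subseteq I_0$ of maximal—generally non-proper—ideals; either the dimension stays constant along it and $\|I_0\|\leq n_0=N_0$ makes $I_0$, hence $I_k$, thin, or at the first dimension increase the ideal $I_K$ is shown to be proper (otherwise $D_{K+1}$ would contain two comparable maximal ideals in its canonical decomposition), the induction hypothesis on finite dimension applies to $I_K$, and the control at step $K+1\leq L_{\fdim I_K}+1$ bounds all $f$ newly finite components by $N_{\fdim I_K+1}$ as in \eqref{eqdef-Ni}, so that $I_K$'s bijection extends to $I_{K+1}$ and thinness descends to $I_k$. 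Without this inclusion-chain mechanism your plan cannot establish thinness, and since your run-length count itself relies on the witnesses being thin, the induction does not close as proposed.
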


The proof of \cref{lem-thin} proceeds by induction over the finite
dimension $\fdim I_\ell=d-\dim I_\ell$.  For the base
case where $I_\ell$ has full dimension $\dim I_\ell=d$, then $I_\ell=(\omega,\dots,\omega)$
is thin and $D_\ell=\+N^d$ is the full space, which can only occur at
step $\ell=0=L_{0}$.  For the induction step, we first establish
thinness with the following claim; note that, as just argued, an order
ideal of dimension~$d$ is necessarily thin.  We then follow with the
bound on $\ell$ to complete the \hyperref[prf:lem-thin]{proof
of \cref{lem-thin}}.

\begin{claim}\label{cl-thin} Let $0\leq d'<d$ and assume
  that \cref{lem-thin} holds for all proper order ideals~$I'$ of
  dimension~$\dim I'>d'$.  If $I$ is any (not necessarily proper) order ideal of dimension~$\dim
  I=d'$ appearing as a maximal ideal in the descending chain $D_0\supsetneq
  D_1\supsetneq\cdots$, then $I$ is thin.
\end{claim}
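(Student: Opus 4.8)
The plan is to take an arbitrary order ideal $I$ of dimension $d'$ that appears as a maximal ideal in some $D_m$ of the chain, and to show that its finite coordinates can be bounded by $N_1,\dots,N_{\fdim I}$ under a suitable bijection. First I would locate the \emph{birth step} of $I$: since $I$ appears in the canonical decomposition of some $D_m$ but the chain is strictly descending and eventually $I$ must disappear (or $m=0$), there is a smallest step $k$ at which $I$ (or an ideal containing $I$ of the same dimension, which by the observation after \cref{def-thin} transfers thinness) first fails to be present, i.e.\ $I$ is proper at some step $k-1$ relative to where it was introduced — more precisely I would consider the last step $k$ at which some ideal $J\supseteq I$ with $\dim J=d'$ lies in $D_k$ but an ideal strictly above it appeared earlier. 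The cleaner route: let $k$ be minimal such that $I\in D_k$ is \emph{not} inherited from a strictly larger ideal in $D_{k-1}$; then at step $k-1$ some ideal properly containing the "reason" for $I$ was proper, and strong monotonicity will let me climb one dimension up.

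The key step is to exploit strong monotonicity in reverse together with the induction hypothesis. Because $I$ is maximal in $D_k$, the complement of $D_{k-1}$ must separate $I$ from whatever was above it; concretely, $I$ arises as $I = J \wedge (\text{complement of some proper ideal } I' \text{ at step } k-1)$ using the intersection/complement formulas recalled in the "Effective Representations" paragraph, where $J\in D_{k-1}$ has $\dim J > d'$ (so $\dim J \geq d'+1$) and $I'$ is proper at step $k-1$. Strong monotonicity, applied to the chain of proper ideals, guarantees a descending sequence of proper ideals $I' = I'_{k-1}, I'_{k-2}, \dots$ with non-increasing dimension, all of dimension $> d'$, hence all thin and of bounded birth step by the induction hypothesis \cref{lem-thin}. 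In particular $I'$ itself is thin, so each finite coordinate $I'(i) \leq N_{\sigma'(i)}$ for a bijection $\sigma'\colon\fin(I')\to\{1,\dots,\fdim I'\}$ with $\fdim I' = d - \dim I' \leq d - (d'+1)$.

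Now I assemble the bijection for $I$. The finite coordinates of $I$ split into those already finite in $J$ (bounded via $J$, but $J$ need not be thin — so instead I would more carefully take $J$ itself to be the ideal "one level up" and induct structurally, tracking that $I$ is obtained from a \emph{thin} ancestor of strictly larger dimension by cutting along one complement coordinate): the new finite coordinate created by intersecting with the complement of $I'$ satisfies $I(i_0) = I'(i_0) \leq N_{\sigma'(i_0)} \leq N_{\fdim I' + 1} = N_{\fdim I}$ since $\fdim I = \fdim I' + 1$ when exactly one $\omega$ is killed, while the remaining finite coordinates of $I$ coincide with those of a thin ideal of dimension $d'+1$ whose bijection uses values in $\{1,\dots,\fdim I - 1\}$; extending by $i_0 \mapsto \fdim I$ yields the required bijection $\sigma$. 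The main obstacle I anticipate is precisely this bookkeeping: ensuring that the ideal one climbs to is itself thin (not merely proper and contained in something thin), that only one coordinate is removed per step, and that the indices assigned never collide — this is exactly what the product formula in \eqref{eqdef-Li} is calibrated for, so I expect the argument to be a careful but ultimately routine induction once the birth-step/strong-monotonicity mechanism is set up correctly.
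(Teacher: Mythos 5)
Your proposal has the right flavour---trace $I$ back to a higher-dimensional ancestor, apply the induction hypothesis to it, and transfer thinness down using the monotonicity of the $N_i$---but the mechanism you use to produce that ancestor does not work in the setting of \cref{cl-thin}, and this is where the real content of the claim lies. First, the structural assertion that $I$ ``arises as $I=J\wedge(\text{complement of some proper ideal }I'\text{ at step }k-1)$'' is unjustified: the chain $D_0\supsetneq D_1\supsetneq\cdots$ here is an \emph{arbitrary} controlled strongly monotone descending chain, so nothing is known about how $D_k$ is obtained from $D_{k-1}$ beyond $D_k\subseteq D_{k-1}$ (and the complement of an ideal is upwards-closed, so the expression $J\wedge(\+N^d\setminus I')$ is not even an intersection of ideals). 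Second, strong monotonicity cannot be ``applied in reverse'' to climb: it only transfers information from a \emph{proper} ideal at step $k+1$ to one at step $k$, and $I$ is expressly not assumed proper---indeed this claim exists precisely to handle non-proper ideals, and the paper's proof of it uses no strong monotonicity at all. What is actually needed, and what you never establish, is that the higher-dimensional ancestor one climbs to is itself proper; the paper gets an ancestor chain $I=I_k\subseteq I_{k-1}\subseteq\cdots\subseteq I_0$ with $I_m\in D_m$ purely from $D_m\subseteq D_{m-1}$, takes the first index $K$ (going backwards) where the dimension jumps, and proves $I_K$ is proper by the antichain property of canonical decompositions: otherwise $I_K$ and the strictly smaller $I_{K+1}$ would both be maximal ideals of $D_{K+1}$, a contradiction. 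Only then can the induction hypothesis be invoked, giving both thinness of $I_K$ and the step bound $K\leq L_{\fdim I_K}$.

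Two further gaps remain even granting an ancestor. You do not treat the case where the dimension never increases along the ancestor chain; there the chain reaches $D_0$ at dimension $d'$, and thinness must come from the control $\|D_0\|\leq n_0=N_0$ together with the observation that containment with equal dimension preserves thinness---there is no proper higher-dimensional ideal to use. And your bookkeeping covers only the case where ``exactly one $\omega$ is killed,'' whereas the dimension can drop by any $f\geq 1$ in a single step; the paper handles this by assigning the $f$ new finite coordinates the indices $\fdim I_K+1,\dots,\fdim I_K+f$ and bounding them all by $N_{\fdim I_K+1}$, which is obtained not from the coordinates of another ideal (your $I(i_0)=I'(i_0)$ rests on the unjustified structural claim above) but from the control function: the jump happens at step $K+1\leq L_{\fdim I_K}+1$, so every finite entry of $I_{K+1}$ is at most $g^{L_{\fdim I_K}+1}(n_0)=N_{\fdim I_K+1}$ by \eqref{eqdef-Ni}, and \cref{NL-mono} finishes the index bookkeeping. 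As it stands, the proposal would need these three repairs---properness of the ancestor via the antichain argument, the constant-dimension base case, and the control-function bound for a multi-coordinate jump---before it becomes a proof.
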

\begin{claimproof}[Proof of \Cref{cl-thin}]
    \ifams\renewcommand{\qedsymbol}{\tiny[\ref{cl-thin}]}\fi
  Let $k$ be a step where $I$ appears in the descending chain
  $D_0\supsetneq D_1\supsetneq\cdots$, i.e., $I\in D_k$, and let us
  write $I_k\eqdef I$.
If $k>0$, since $D_k\subseteq D_{k-1}$, there exists an order ideal
  $I_{k-1}\in D_{k-1}$ such that $I_k\subseteq I_{k-1}$.  If $k=0$, or
  by repeating this argument if $k>0$, we obtain a chain of order
  ideals (with decreasing indices)
  \begin{equation}\label{eq-order ideal-chain}
    I_k \subseteq I_{k-1}\subseteq\cdots\subseteq I_0
  \end{equation}
  where $I_m\in D_m$ for all $k\geq m\geq 0$.  Every order ideal in
  that chain must have dimension at least $\dim I_k=d'$ since they all
  contain $I_k$.  Two cases arise.
  \begin{enumerate}
  \item If every order ideal in
    the chain~\eqref{eq-order ideal-chain} has dimension $\dim I_k$,
    then because the descending chain $D_0\supsetneq
    D_1\supsetneq\cdots$ is $(g,n_0)$-controlled, we have $\|I_0\|\leq
    n_0=N_{0}$ and we know by \cref{NL-mono} that~$I_0$ is thin.  Since
    $I_k\subseteq I_0$ and $\dim I_k=\dim I_0$, $I_k$ is also
    thin.
  \item Otherwise there exists a first index $K$ along the
    chain~\eqref{eq-order ideal-chain} where the dimension increases,
    i.e., such that $\dim I_k<\dim I_K$ and $\dim I_m=\dim I_k$ for all
    $k\geq m>K$.  Then $I_K$ is proper, as otherwise $D_{K+1}$ would
    contain two distinct but comparable order ideals in its canonical
    decomposition, namely~$I_{K}$ and~$I_{K+1}$: indeed,
    $I_{K+1}\subseteq I_{K}$ and $\dim I_{K+1}=\dim I_k<\dim I_{K}$
    imply $I_{K+1}\subsetneq I_K$.  By 
    assumption, \cref{lem-thin} can be applied to $I_K$ of dimension
    $\dim I_K>\dim I_k=d'$, thus $I_K$ is thin and $K\leq L_{\fdim I_K}$.

    \medskip
    Let us now show that $I_{K+1}$ is thin, which will also yield that
    $I_k$ is thin since $I_k \subseteq I_{K+1}$ and $\dim I_k=\dim
    I_{K+1}$.

    Since $\dim I_{K+1}<\dim I_{K}$, we let $f\eqdef\dim I_K-\dim
    I_{K+1}=\fdim I_{K+1}-\fdim I_K>0$. As furthermore
    $I_{K+1}\subseteq I_K$, $\omega(I_{K+1})\subsetneq\omega(I_K)$ and
    we let
    $\{i_1,\dots,i_f\}\eqdef \omega(I_K)\setminus\omega(I_{K+1})=
    \fin(I_{K+1})\setminus\fin(I_K)$.
    
    Since $I_K$ is thin, there exists a bijection
    $\sigma\colon\fin(I_K)\to \{1,\dots,\fdim(I_K)\}$ such that
    $I_K(i)\leq N_{\sigma(i)}$ for all $i\in\fin(I_K)$.  We extend
    $\sigma$ to a bijection
    $\sigma'\colon \fin(I_K)\uplus\{i_1,\dots,i_f\}\to\{1,\dots,\fdim
    I_K+f\}$: we let $\sigma'(i)\eqdef\sigma(i)$ for all
    $i\in\fin(I_K)$, and $\sigma'(i_j)\eqdef\fdim I_K+j$ for all
    $1\leq j\leq f$.  Let us check that~$\sigma'$ witnesses the
    thinness of~$I_{K+1}$.
    \begin{itemize}
    \item Because $I_{K+1}\subseteq I_K$, for all those
      $i\in\fin(I_K)$, $I_{K+1}(i)\leq I_K(i)\leq
      N_{\sigma(i)}=N_{\sigma'(i)}$.
    \item Since $K+1\leq L_{\fdim I_K}+1$ and since the descending
      chain $D_0\supsetneq D_1\supsetneq\cdots$ is
      $(g,n_0)$-controlled, we have a bound of $g^{L_{\fdim
      I_K}+1}(n_0)=N_{\fdim I_K+1}$ on all the finite components of
      $I_{K+1}$, and in particular $I_{K+1}(i_j)\leq N_{\fdim I_K+1}$
      for all $1\leq j\leq f$. By \cref{NL-mono}, we conclude that
      $I_{K+1}(i_j)\leq N_{\fdim I_K+j}=N_{\sigma'(i_j)}$ for all
      $1\leq j\leq f$.\claimqedhere
    \end{itemize}
  \end{enumerate}
\end{claimproof}

\begin{proof}[Proof of \Cref{lem-thin}]  \label{prf:lem-thin}
  We have already argued for the base case, so let us turn to the
  inductive step where $\dim I_\ell<d$.  If $\ell>0$ and since our
  descending chain is strongly monotone, we can find an order ideal
  $I_{\ell-1}$ proper at step $\ell-1$ such that
  $\dim I_\ell\leq\dim I_{\ell-1}$.  Both if $\ell=0$
  or by repeating this argument, we obtain a sequence of
  order ideals (with decreasing indices)
  \begin{equation}\label{eq-greedy}
    I_\ell, I_{\ell-1},\dots,I_0
  \end{equation}
  where, for each $\ell>k\geq 0$, $I_k$ is proper at step $k$, and $\dim
  I_{k+1}\leq\dim I_k$.
  
  Let us decompose our sequence~\eqref{eq-greedy} by identifying the
  first step $L$ where $\dim I_{L+1}<\dim I_{L}$; let $L\eqdef -1$ if
  this never occurs.  After this step, for all $L\geq k\geq 0$, $\dim
  I_k>\dim I_\ell$.  Within the initial segment, for $\ell\geq k>L$,
  the dimension $\dim I_k$ remains constant equal to $\dim I_\ell$,
  and the induction hypothesis allows to apply \cref{cl-thin} and infer
  that every order ideal $I_k$ in this initial segment, and in
  particular~$I_\ell$ among them, is thin.
  
  It remains to provide a bound on~$\ell$.  The $\ell-L$
  order ideals in the initial segment are thin, and distinct since
  they are proper, hence by \cref{eq-card-thin},
  \begin{align}\label{eq-L}
    \ell &\leq L+\prod_{1\leq i\leq \fdim I_\ell}\hspace*{-.9em}
    (d-i+1)(N_{i}+1)\;.\ifams\hphantom{_{\fdim I_\ell} L_{\fdim I_\ell}}\hspace*{.65em}\hspace*{1.7em}\fi
  \end{align}
  \begin{description}
  \item[If {\boldmath$L\geq 0$}] we can apply the induction
      hypothesis to the proper order ideal $I_{L}$ of finite dimension $\fdim
      I_L<\fdim I_\ell$ along with \cref{NL-mono} to yield $L\leq L_{\fdim I_L}\leq L_{\fdim I_\ell-1}$ and therefore
    \begin{align}
    \ifams{}\else\hspace*{-1.5em}\fi\ell&\leq  L_{\fdim I_\ell-1} +\prod_{1\leq i\leq
      \fdim I_\ell}\hspace*{-.9em}(d-i+1)(N_{i}+1)\label{eq-L-ell-1}= L_{\fdim I_\ell}\;.
  \end{align}
  \item[If {\boldmath$L=-1$}] then \eqref{eq-L-ell-1} also
    holds since $L_{\fdim I_\ell-1}\geq 0>L$ in~\eqref{eq-L}.\qedhere\end{description}
\end{proof}

We deduce a general combinatorial statement on the length of
controlled strongly monotone descending chains, that generalises and
refines \cite[Thm.~4.4]{lazic21} thanks to thinness.

\begin{theorem}[Length function for strongly monotone descending chains]\label{th-thin}
  Consider a $(g,n_0)$-controlled strongly monotone descending
  chain~$D_0 \supsetneq \cdots\supsetneq D_\ell$ of downwards-closed
  subsets of~$\+N^d$. Then $\ell\leq L_{d}+1$.
\end{theorem}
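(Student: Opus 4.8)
The plan is to feed the last non-trivial step of the chain into the Thinness Lemma (\cref{lem-thin}) and invoke nothing else. If $\ell=0$ there is nothing to prove, since $0\leq L_d+1$. So assume $\ell\geq 1$. Because $D_{\ell-1}\supsetneq D_\ell$, there is (as recalled in the discussion of proper order ideals) at least one order ideal $I_{\ell-1}$ that is proper at step $\ell-1$.

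Apply \cref{lem-thin} to this proper order ideal $I_{\ell-1}$: it gives both that $I_{\ell-1}$ is thin (which we will not need here) and the length bound $\ell-1\leq L_{\fdim I_{\ell-1}}$. Since $\fdim I_{\ell-1}=d-\dim I_{\ell-1}\leq d$ and the sequence $(L_i)_{0\leq i\leq d}$ is non-decreasing by \cref{NL-mono}, we conclude $L_{\fdim I_{\ell-1}}\leq L_d$, hence $\ell-1\leq L_d$, that is, $\ell\leq L_d+1$, which is exactly the claimed bound.

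There is essentially no obstacle here beyond invoking \cref{lem-thin} correctly: all of the real work — establishing thinness of every proper order ideal and the sharp per-ideal length bound $\ell\leq L_{\fdim I_\ell}$ — has already been carried out. The only points worth a moment's care are the off-by-one (a chain $D_0\supsetneq\cdots\supsetneq D_\ell$ has its last proper order ideal at step $\ell-1$, which is why the final bound is $L_d+1$ rather than $L_d$) and the degenerate case $\fdim I_{\ell-1}=0$, i.e.\ $I_{\ell-1}=(\omega,\dots,\omega)$ and $D_{\ell-1}=\+N^d$, where the bound reads $\ell-1\leq L_0=0$ and the argument goes through unchanged.
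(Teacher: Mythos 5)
Your proposal is correct and matches the paper's own proof essentially verbatim: both handle $\ell=0$ trivially, then take a proper order ideal at step $\ell-1$ and apply \cref{lem-thin} together with the monotonicity of $(L_i)_i$ from \cref{NL-mono} to get $\ell-1\leq L_{\fdim I}\leq L_d$. Nothing to add.
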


\begin{proof}
In such a descending chain, either $\ell=0\leq L_{d}+1$, or
  $\ell>0$ and there must be an order ideal $I$ proper at step~$\ell-1$, and
  $I$ has finite dimension at most~$d$.  By
  \cref{lem-thin,NL-mono}, $\ell-1\leq L_{\fdim I}\leq
  L_{d}$ in that case.
\end{proof}

\subsection{Thin Order Ideals and Filters}\label{sub-thin-ideal}

Let us conclude this section with some consequences
of \cref{lem-thin} and \cref{cl-thin}.  Whereas thinness was
posited \emph{a priori} in the proof of
\citet[Thm.~3.3]{kunnemann23} and then shown to indeed allow a
suitable decomposition of minimal covering executions and to
eventually prove their result, here in the descending chain setting
it is an inherent property of all the order ideals appearing in the
chain, thereby providing a ``natural'' explanation for thinness.

\begin{corollary}\label{cor-thin}
  Consider a $(g,n_0)$-controlled strongly monotone descending
  chain $D_0 \supsetneq D_1\supsetneq\cdots$ of downwards-closed
  subsets of~$\+N^d$. Then every order ideal appearing in the chain is thin.
\end{corollary}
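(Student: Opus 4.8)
The plan is to derive \cref{cor-thin} directly from the two results just established, namely \cref{lem-thin} (the Thinness Lemma) and \cref{cl-thin}. The statement is about \emph{every} order ideal appearing in the chain, whether or not it is proper, so the first observation is that \cref{lem-thin} itself handles only the proper ones. The key is that \cref{cl-thin} was proved for arbitrary (not necessarily proper) maximal order ideals, under the hypothesis that \cref{lem-thin} holds for all proper order ideals of strictly larger dimension. Since \cref{lem-thin} has by now been fully proved, that hypothesis is unconditionally available for every dimension, so \cref{cl-thin} applies to every maximal order ideal appearing in the chain.

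Concretely, I would argue as follows. Let $I$ be an order ideal appearing in the canonical decomposition of some $D_k$ in the chain, i.e.\ $I\in D_k$, and set $d'\eqdef\dim I$. If $d'=d$ then $I=(\omega,\dots,\omega)$, which is thin by definition (the empty bijection on $\fin(I)=\emptyset$ witnesses thinness), so we may assume $d'<d$. Now \cref{lem-thin} holds for all proper order ideals of every dimension, in particular for all those of dimension strictly greater than $d'$; this is exactly the hypothesis required to invoke \cref{cl-thin} with this value of $d'$. Applying \cref{cl-thin} to $I$ — which is, by assumption, an order ideal of dimension $d'$ appearing as a maximal ideal in the chain — we conclude that $I$ is thin. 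Since $I$ was an arbitrary order ideal appearing in the chain, this proves the corollary.

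I do not anticipate a genuine obstacle here: the entire content of the corollary is already packaged inside \cref{cl-thin}, and the only thing to be careful about is the bookkeeping of the inductive hypothesis. The one point worth spelling out explicitly is \emph{why} \cref{cl-thin}'s hypothesis is now discharged for free: in the proof of \cref{lem-thin} the claim was used within an induction on finite dimension, but once that induction is complete, \cref{lem-thin} is a theorem for \emph{all} proper order ideals, so re-invoking \cref{cl-thin} afterwards carries no residual assumption. The full-dimensional base case $\dim I = d$ should be mentioned so that the appeal to \cref{cl-thin}, which is stated only for $d'<d$, is literally applicable.

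\begin{proof}
  Let $I$ be any order ideal appearing in the chain, say $I\in D_k$,
  and let $d'\eqdef\dim I$.  If $d'=d$ then
  $I=(\omega,\dots,\omega)$ is thin.  Otherwise $d'<d$, and
  since \cref{lem-thin} has been established for all proper order
  ideals---in particular for all proper order ideals of
  dimension~$>d'$---the hypothesis of \cref{cl-thin} is satisfied.
  As $I$ is an order ideal of dimension~$d'$ appearing as a maximal
  ideal in the chain, \cref{cl-thin} applies to~$I$ and shows that
  $I$ is thin.
\end{proof}
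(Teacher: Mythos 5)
Your proof is correct and matches the paper's intent exactly: the paper presents \cref{cor-thin} as an immediate consequence of \cref{lem-thin} and \cref{cl-thin}, which is precisely your argument (invoke \cref{cl-thin} now that \cref{lem-thin} is unconditionally established, handling the full-dimensional ideal $(\omega,\dots,\omega)$ separately). No gaps; the bookkeeping of the discharged hypothesis and the $\dim I=d$ base case are handled properly.
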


\cref{cor-thin} also entails a form of thinness of the minimal
  configurations in the complement of the downwards-closed sets
  $D_k$.  Recall that such a complement is the upward-closure of a finite basis
$B_k\eqdef\min_{\sqsubseteq}\+N^d\setminus D_k$.  Each element $\vec
v\in B_k$ is a vector defining a so-called \emph{(principal) order filter}
${\uparrow}\vec v$ of $\+N^d$.  Let us call a vector $\vec
v\in\+N^d$ \emph{nearly thin} if there exists a permutation
$\sigma\colon\{1,\dots,d\}\to\{1,\dots,d\}$ such that, for all $1\leq
i\leq d$, $\vec v(i)\leq N_{\sigma(i)}+1$.  We can relate thin order
ideals with nearly thin order filters, which by \cref{cor-thin}
applies to every vector $\vec v\in\bigcup_k B_k$\ifams\relax\else\
(see \ifx\authoranonymous\relax
\cref{app-thinfilter}\else the full version\fi\ for a proof)\fi.
\begin{restatable}{proposition}{thinfilter}\label{prop-thin-filter}
  If every order ideal in the canonical decomposition of a downwards-closed
  set $D\subseteq\+N^d$ is thin, then each $\vec
  v\in\min_{\sqsubseteq}\+N^d\setminus D$ is nearly thin.
\end{restatable}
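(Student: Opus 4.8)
The plan is to characterise each minimal vector $\vec v \in \min_{\sqsubseteq}\+N^d\setminus D$ in terms of the canonical decomposition $D = I_1\cup\cdots\cup I_n$ via the complement description already recorded in the excerpt: the complement of an order ideal $I_j$ is $\bigcup_{i\in\fin(I_j)}{\uparrow}\big((I_j(i)+1)\cdot\vec e_i\big)$, so $\+N^d\setminus D = \bigcap_{j=1}^n\bigcup_{i\in\fin(I_j)}{\uparrow}\big((I_j(i)+1)\cdot\vec e_i\big)$. A vector $\vec v$ lies in this set iff for every $j$ there is a coordinate $i\in\fin(I_j)$ with $\vec v(i)\geq I_j(i)+1$, i.e.\ $\vec v\not\sqsubseteq I_j$. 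I would first argue that if $\vec v$ is moreover $\sqsubseteq$-minimal in the complement, then every coordinate $i$ with $\vec v(i)>0$ is "used" by some ideal $I_j$ in this sense: were $\vec v(i_0)>0$ but $\vec v(i_0)$ not needed to escape any $I_j$ (i.e.\ $\vec v - \vec e_{i_0}$ still escapes every $I_j$), then $\vec v - \vec e_{i_0}$ would still lie in the complement, contradicting minimality. Hence for each coordinate $i$ in the support of $\vec v$ we may fix a witness ideal $I_{j(i)}$ with $i\in\fin(I_{j(i)})$ and $\vec v(i) = I_{j(i)}(i)+1$ (minimality forces equality, not merely $\geq$: decreasing $\vec v(i)$ to $I_{j(i)}(i)+1$ keeps us in the complement).

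Next I would bound $\vec v(i)$ for each such $i$. Since $I_{j(i)}$ is thin, there is a bijection $\sigma_{j(i)}\colon\fin(I_{j(i)})\to\{1,\dots,\fdim I_{j(i)}\}$ with $I_{j(i)}(i)\leq N_{\sigma_{j(i)}(i)}$, so $\vec v(i) = I_{j(i)}(i)+1 \leq N_{\sigma_{j(i)}(i)}+1 \leq N_d+1$ by \cref{NL-mono}. For coordinates $i$ not in the support, $\vec v(i)=0\leq N_1+1$. This already shows every coordinate of $\vec v$ is bounded by $N_d+1$, but "nearly thin" demands the stronger statement that the bounds can be realised simultaneously by a \emph{permutation} $\sigma$ of $\{1,\dots,d\}$ with $\vec v(i)\leq N_{\sigma(i)}+1$. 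The remaining work is therefore to assemble the local witnesses $\sigma_{j(i)}(i)$ into one global injection $\{1,\dots,d\}\to\{1,\dots,d\}$, and this is where the argument needs care: two distinct coordinates $i\neq i'$ might be witnessed by the same ideal $I_{j(i)}=I_{j(i')}$, in which case $\sigma_{j(i)}(i)\neq\sigma_{j(i)}(i')$ automatically (bijectivity), but coordinates witnessed by different ideals could collide on the same index value.

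The main obstacle is exactly this assembly step, and I would resolve it by a greedy / Hall-type argument rather than naively concatenating the $\sigma_{j(i)}$. Order the coordinates in the support of $\vec v$ by increasing value $\vec v(i)$ and assign them indices greedily: for the coordinate with the $t$-th smallest value, its witness ideal $I_{j(i)}$ is thin and has $\vec v(i)-1 = I_{j(i)}(i)\leq N_{r}$ for $r=\sigma_{j(i)}(i)\leq\fdim I_{j(i)}$; I want to show one can always pick a still-unused index $s\in\{1,\dots,d\}$ with $\vec v(i)\leq N_s+1$. Since the values are processed in increasing order and at step $t$ only $t-1$ indices have been used, it suffices that there are at least $t$ indices $s$ with $N_s+1\geq \vec v(i)$; this holds because the $t$ coordinates processed so far (including $i$) all have value $\leq\vec v(i)$, and each contributes — via its thin witness and the monotonicity $N_1\leq\cdots\leq N_d$ — a distinct admissible index in $\{1,\dots,\min(d,\,\text{position where }N\text{ crosses }\vec v(i)-1)\}$. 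Making this counting precise (essentially verifying Hall's condition for the bipartite graph between support coordinates and index values, using that each value $\vec v(i)-1$ is dominated by $N_{\fdim I_{j(i)}}$ together with the fact that thin ideals of small finite dimension have only small-index components available) is the crux; once it is in place, extending the resulting injection on the support to a permutation $\sigma$ of $\{1,\dots,d\}$ arbitrarily, and noting $\vec v(i)=0\leq N_{\sigma(i)}+1$ on the complement of the support, completes the proof.
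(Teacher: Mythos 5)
Your setup is sound and very close to the paper's own starting point: you describe $\+N^d\setminus D$ through the complements of the ideals of the decomposition, and you correctly extract from minimality, for every support coordinate $i$ of $\vec v$, a witness ideal $I_{j(i)}$ with $i\in\fin(I_{j(i)})$ and $\vec v(i)=I_{j(i)}(i)+1$ (in fact minimality gives more, which you never state or use: $\vec v-\vec e_i\sqsubseteq I_{j(i)}$, i.e.\ the witness dominates $\vec v$ on every other coordinate). The genuine gap is exactly where you place the ``crux'': the greedy/Hall assembly of one permutation is asserted, not proved, and the justification you sketch---that the $t$ coordinates processed so far each ``contribute a distinct admissible index''---is precisely the collision problem you flagged; nothing in the data you plan to feed into the counting (each $\vec v(i)-1$ bounded by $N_{\sigma_{j(i)}(i)}$, hence by $N_{\fdim I_{j(i)}}$, together with monotonicity of the $N_i$) prevents witnesses of different coordinates from using the same index. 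Worse, that data alone is provably insufficient: take $d=3$ and any monotone values with $N_1=1$ and $N_2\geq 4$, and let $D={\downarrow}(4,1,\omega)\cup{\downarrow}(1,4,\omega)\cup{\downarrow}(1,\omega,4)$; all three ideals are thin, the vector $\vec v=(5,5,5)$ escapes every ideal and each of its coordinates admits a thin equality witness with value bounded by $N_2+1$, yet $\vec v$ is not nearly thin, since any permutation must send some coordinate to index~$1$ and $5>N_1+1$. This $\vec v$ is of course not minimal---and that is the point: any correct completion of your counting must re-invoke minimality beyond the per-coordinate equalities, e.g.\ through the domination property above, which your sketch never does.

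For comparison, the paper proceeds by contradiction from a different representation: every $\vec v\in\min_{\sqsubseteq}\+N^d\setminus D$ equals $\bigvee_{1\leq j\leq m}\bigl((I_j(i_j)+1)\cdot\vec e_{i_j}\bigr)$ for a choice of one finite coordinate $i_j$ per ideal $I_j$ of the decomposition; sorting the coordinates so that $\vec v(1)\leq\cdots\leq\vec v(d)$, taking the least $k$ with $\vec v(k)>N_k+1$, and choosing value-sorted thinness bijections, every ideal responsible for the violating value at~$k$ can be redirected to a coordinate $i'_j\geq k$ where $I_j(i'_j)\leq N_k$, which yields a vector of the same form strictly below $\vec v$ and still outside $D$, contradicting minimality. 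If you prefer to keep your witnesses, a short repair along your lines does exist: let $k$ be the first violating index in sorted order and $I$ the witness of coordinate~$k$ with bijection $\sigma$; since $\vec v(i)\leq I(i)$ for $i\neq k$ and $\vec v(i)\geq\vec v(k)>N_k+1$ for all $i\geq k$, each such $i$ lies in $\omega(I)$ or satisfies $\sigma(i)>k$, so the $d-k+1$ coordinates $i\geq k$ are at most $\dim I+(\fdim I-k)=d-k$ many, a contradiction. Note that this repair hinges on the domination $\vec v-\vec e_k\sqsubseteq I$, i.e.\ on the stronger consequence of minimality that is absent from your counting argument as written.
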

\ifams\begin{proof}
\newcounter{myfoo}
\newcounter{mybar}
\setcounter{myfoo}{\value{equation}}
\setcounter{equation}{0}
\renewcommand{\theequation}{\fnsymbol{equation}}
Consider the canonical decomposition $D=I_1\cup\cdots\cup I_m$ of~$D$.
Then $U\eqdef\+N^d\setminus D=(\+N^d\setminus
I_1)\cap\cdots\cap(\+N^d\setminus I_m)$.  In turn, for each $1\leq
j\leq m$, $\+N^d\setminus
I_j=\bigcup_{i\in\fin(I_j)}{\uparrow}\big((I_j(i)+1)\cdot\vec e_i)$ where
$\vec e_i$ denotes the unit vector such that $\vec e_i(i)\eqdef 1$ and
$\vec e_i(j)\eqdef 0$ for all $j\neq i$.
Distributing intersections over unions, we obtain that
\begin{equation}\label{eq-U}
  U=\bigcup_{(i_1,\dots,i_m)\in\fin(I_1)\times\cdots\times\fin(I_m)}\bigcap_{1\leq j\leq m}{\uparrow}\big((I_j(i_j)+1)\cdot\vec e_{i_j})\;.
\end{equation}
For two order filters ${\uparrow}\vec v$ and ${\uparrow}\vec v'$,
$({\uparrow}\vec v)\cap({\uparrow}\vec v')={\uparrow}(\vec v\vee\vec
v')$ where $\vec v\vee\vec v'$ denotes the component-wise maximum of
$\vec v$ and $\vec v'$.  Therefore, by \eqref{eq-U}, any $\vec
v\in\min_\sqsubseteq U$ is of the form
\begin{equation}\label{eq-minv}
  \vec v_{i_1,\dots,i_m}\eqdef\bigvee_{1\leq j\leq m}\big((I_j(i_j)+1)\cdot\vec e_{i_j})
\end{equation}
for some $(i_1,\dots,i_m)\in\fin(I_1)\times\cdots\times\fin(I_m)$.
Note that not all the vectors $\vec v_{i_1,\dots,i_m}$ defined
by \eqref{eq-minv} are necessarily minimal in~$U$, but that
\begin{equation}\label{eq-U-V}
  \min_\sqsubseteq
U=\min_\sqsubseteq\{\vec v_{i_1,\dots,i_m}\mid
(i_1,\dots,i_m)\in\fin(I_1)\times\cdots\times\fin(I_m)\}\;.
\end{equation}
\medskip

Assume by contradiction that there exists some minimal vector $\vec
v\in\min_\sqsubseteq U$ that is not nearly thin.  Without loss of
generality, $\vec v(1)\leq\vec v(2)\leq\cdots\leq\vec v(d)$, as
otherwise we could apply a suitable permutation of $\{1,\dots,d\}$ on
the components of each ideal $I_j\in D$.  Then, because $N_i\leq
N_{i'}$ for all $i<i'$ by \cref{NL-mono}, $\vec v$ not being
nearly thin entails that there exists an index $k\in\{1,\dots,d\}$
such that $\vec v(k)>N_k+1$ but $\vec v(i)\leq N_{i}+1$ for all
$i<k$.

By \eqref{eq-U-V}, there exists
$(i_1,\dots,i_m)\in\fin(I_1)\times\cdots\times\fin(I_m)$ such that
$\vec v=\vec v_{i_1,\dots,i_m}$.  We are going to show that there
exists $(i'_1,\dots,i'_m)\in\fin(I_1)\times\cdots\times\fin(I_m)$ such
that $\vec v_{i'_1,\dots,i'_m}\sqsubsetneq\vec v$, which
by \eqref{eq-U-V} contradicts the minimality of~$\vec v$.

\medskip
Looking more closely at the individual components of~$\vec v$
in \eqref{eq-minv}, define for all $1\leq i\leq d$ the set
$S_i\eqdef\{1\leq j\leq m\mid i_j=i\}$ of indices $j\in\{1,\dots,m\}$
such that the value of $\vec v(i)$ ``stems'' from $I_j$.  Then
\addtocounter{equation}{3}
\begin{equation}\label{eq-minv-i}
  \vec v_{i_1,\dots,i_m}(i)=\begin{cases}0&\text{if }S_i=\emptyset\\
  \max\{I_{j}(i)+1\mid j\in S_i\}&\text{otherwise}.\end{cases}
\end{equation}
In particular, for the $k$th component, $S_k\neq\emptyset$ and we let
$V_k\eqdef\{j\in S_k\mid I_j(k)>N_k\}$ denote the indices~$j$ of the
ideals $I_j\in D$ responsible for the violation of near thinness.

\begin{example}[numbered=no,label=ex:thin]
  Let us illustrate the previous notations.  Let $d\eqdef 4$ and
  assume for the sake of simplicity that
  \begin{align*}
    N_1&\eqdef 2\;, & N_2&\eqdef4\;,&N_3&\eqdef 6\;,&N_4&\eqdef 8\;.
  \end{align*}
  Consider $D\eqdef\{I_1,I_2,I_3,I_4,I_5\}$ with
  \begin{align*}
    I_1&\eqdef(1,4,6,7)\;,&I_2&\eqdef(2,6,4,8)\;,&I_3&\eqdef(3,1,7,6)\;,\\
    I_4&\eqdef(3,1,7,6)\;,&I_5&\eqdef(4,5,3,0)\;.
  \end{align*}
  Then $\vec v\eqdef(2,7,7,7)=\vec v_{3,2,4,1,2}$ is not nearly thin
  with $k=2$, stem sets
  \begin{align*}
    S_1&=\{4\}\;, & S_2&=\{2,5\}\;, & S_3&=\{1\}\;,& S_4&=\{3\}\;,
  \end{align*}
  and $V_2=\{2,5\}$, and indeed $\vec v(2)$ stems from the ideals $I_2$
  and $I_5$, which are such that $I_2(2)=6>N_2$ and $I_5(2)=5>N_2$.\hfill\qedsymbol
\end{example}

For all $1\leq j\leq m$, because $I_j$ is thin, there exists a
bijection~$\sigma_j\colon\fin(I_j) \to\{1,\dots,\fdim I_j\}$ such
that, for all $i\in\fin(I_j)$, $I_j(i)\leq N_{\sigma_j(i)}$.  Without
loss of generality, we can assume that for all $i,i'\in\fin(I_j)$,
$I_j(i)\leq I_j(i')$ whenever $\sigma_j(i)<\sigma_j(i')$.
\begin{example}[numbered=no,continues=ex:thin,label=ex:thin2]
  Here are suitable bijections witnessing thinness:\pushQED{\qed}
  \begin{align*}
    \sigma_1&=(1\,2\,3\,4)\;,&
    \sigma_2&=(1\,3\,2\,4)\;,&
    \sigma_3&=(2\,1\,4\,3)\;,\\
    \sigma_4&=(2\,1\,4\,3)\;,&
    \sigma_5&=(3\,4\,2\,1)\;.&
            &\qedhere
  \end{align*}
\end{example}

For every $j\in V_k$,
$\sigma_j^{-1}(\{1,\dots,k\})\setminus\{1,\dots,k-1\}$ is non empty.
Therefore it contains an element $i'_j\geq k$ such that $I_j(i'_j)\leq
N_k$.  For every $1\leq j\leq m$ such that $j\not\in V_k$, let
$i'_j\eqdef i_j$.

Let us check that $\vec v_{i'_1,\dots,i'_m}\sqsubsetneq\vec v$, which
will allow to conclude.  Define $S'_i\eqdef\{1\leq j\leq m\mid
i'_j=i\}$ for each $1\leq i\leq d$; then \cref{eq-minv-i} holds
mutatis mutandis for $\vec v_{i'_1,\dots,i'_m}$ and
\begin{description}
\item[for $i<k$] $S'_i=S_i$ hence $\vec v_{i'_1,\dots,i'_m}(i)=\vec v(i)$;
\item[for $i=k$] $S'_k=S_k\setminus V_k=\{j\in S_k\mid I_j(k)\leq
  N_k\}$ hence $\vec v_{i'_1,\dots,i'_m}(k)\leq N_k+1<\vec v(k)$ by definition of~$k$;
\item[for $i>k$] $S'_i=S_i\cup\{j\in V_k\mid i'_j=i\}$ hence $\vec
  v_{i'_1,\dots,i'_m}(i)=\max\{I_j(i)+1\mid j\in
  S'_i\}=\max(\max\{I_j(i)+1\mid j\in S_i\},\max\{I_j(i)+1\mid
  j\in V_k\text{ and }i'_j=i\})$.
  \begin{itemize}
  \item On the one hand, 
    $\max\{I_j(i)+1\mid j\in S_i\}=\vec v(i)$.
  \item On the other hand, $I_j(i'_j)\leq N_k$ for
    all $j\in V_k$ by definition of $i'_j$, hence $\max\{I_j(i)+1\mid
    j\in V_k\text{ and }i'_j=i\}\leq N_k+1<\vec v(k)$ by definition of~$k$.
  \end{itemize}
  As $\vec v(k)\leq\vec v(i)$ by assumption since $i>k$, we
  conclude $\vec v_{i'_1,\dots,i'_m}(i)=\vec v(i)$.
\end{description}
\begin{example}[numbered=no,continues=ex:thin2]
  We have $\sigma_2^{-1}(\{1,2\})=\{1,3\}$ and
  $\sigma_5^{-1}(\{1,2\})=\{3,4\}$, hence we can pick $i'_2\eqdef 3$
  and $i'_5\eqdef 4$.  This defines $\vec
  v_{3,3,4,1,4}$ with stem sets
  \begin{align*}
    S'_1&=\{4\}\;,&
    S'_2&=\emptyset\;,&
    S'_3&=\{1,2\}\;,&
    S'_4&=\{3,5\}\;.
  \end{align*}
  Then $\vec v_{3,3,4,1,4}=(2,0,7,7)\sqsubsetneq\vec v$ as desired.\hfill\qedsymbol
\end{example}
\end{proof}

\setcounter{mybar}{\value{equation}}
\setcounter{equation}{\value{myfoo}}
\renewcommand{\theequation}{\arabic{equation}}
 \fi

\section{Applications}\label{sec-apps}

We describe two applications of \cref{th-thin} in this section.  The
first application in \cref{sec-vas} is to the coverability problem in
vector addition systems, and relies on the analysis of the backward
coverability algorithm done in~\citep{lazic21}.  Thus we can indeed
recover the improved upper bound of \citet{kunnemann23} for the
coverability problem in the more general setting of descending chains,
and show that the backward coverability algorithm
achieves this $n^{2^{O(d)}}$
upper bound (see \cref{cor-bc-vas}).

The second application in \cref{sec-an} focuses on the coverability
problem in invertible affine nets, a class introduced
by~\citet{benedikt17}, who analysed the complexity of the problem
through a reduction to zeroness in invertible polynomial automata.  We
give a direct analysis of the complexity of the backward coverability
algorithm, which follows the same lines as in the VAS case, and allows
to improve on the \ComplexityFont{2EXPSPACE} upper bound shown
in~\citep{benedikt17} for the problem, by showing that it is
actually \ComplexityFont{EXPSPACE}-complete (see \cref{cor-an}).  This
application additionally illustrates the usefulness of considering
strongly monotone descending chains rather than the $\omega$-monotone
ones, as the descending chains constructed by the backward algorithm
for invertible affine nets are in general not $\omega$-monotone.

As both applications take place in the framework of
well-structured transition systems~\citep{abdulla00,finkel01}, we
start with a quick refresher on this framework, the backward
coverability algorithm, and its dual view using downwards-closed
sets~\citep{lazic21} in the upcoming \cref{sec-wsts}.

\subsection{Coverability in Well-Structured Transition Systems}
\label{sec-wsts}
Well-structured transition systems (WSTS) form an abstract family of
computational models where the set of configurations is equipped with
a well-quasi-ordering ``compatible'' with the computation steps.  This
wqo ensures the termination of generic algorithms checking some
important behavioural properties like coverability and termination.
While the idea can be traced back to the 1980's~\citep{finkel87}, this
framework has been especially popularised through two landmark
surveys~\citep{abdulla00,finkel01} that emphasised its wide
applicability, and new WSTS models keep being invented in multiple
areas to this day.

\subsubsection{Well-Structured Transition Systems}
A \emph{well-structured transition system}
(WSTS)~\citep{abdulla00,finkel01} is a triple $(X,{\to},{\leq})$
where~$X$ is a set of configurations, ${\to}\subseteq X\times X$ is a
transition relation, and $(X,{\leq})$ is a wqo with the following
\emph{compatibility} condition: if $x\leq x'$ and $x\to y$, then there
exists $y'\geq y$ with $x'\to y'$.

The coverability problem below corresponds to the verification of
safety properties, i.e., to checking that no bad configuration can
ever be reached from a given initial configuration $s\in X$.  Here we
are given an error configuration $t\in X$, and we assume that any
configuration larger than~$t$ is also an error.
\begin{problem}[Coverability in well-structured transition systems]
\hfill
\begin{description}
\item[input] a well-structured transition system $(X,{\to},{\leq})$
and two configurations $s$ and $t$ in~$X$
\item[question] does $s$ \emph{cover} $t$, i.e., does
there exist $t'\in X$ such that $s\to^\ast t'\geq t$?
\end{description}
\end{problem}

\subsubsection{The Backward Coverability Algorithm}\label{sub-bc}
The first published version of this algorithm seems to date back
to~\citep{arnold78}, where it was used to show the decidability of
coverability in vector addition systems extended with reset
capabilities, before it was rediscovered and generalised to
well-structured transition systems~\citep{abdulla00}.

\paragraph{The Algorithm.}
Given an instance of the coverability problem, the \emph{backward
 coverability algorithm}~\citep{arnold78,abdulla00,finkel01} computes (a finite
 basis for) the upwards-closed set
\begin{equation}
 U_\ast\eqdef\{x\in X\mid\exists t'\geq t\mathbin. x\to^\ast t'\}
\end{equation}
of all the
configurations that cover~$t$, and then checks whether $s\in
U_\ast$.

The set~$U_\ast$ itself is computed by letting
\begin{align}\label{eq-bc}
  U_0&\eqdef{\uparrow}t\;,&
  U_{k+1}&\eqdef U_k\cup\mathrm{Pre}_\exists(U_k)\;,
\end{align}
  where, for a set $S\subseteq X$,
\ifams\begin{equation*}\else$\fi
  \mathrm{Pre}_\exists(S)\eqdef\{x\in X\mid\exists
                         y\in S\mathbin.x \to y\}.\ifams\end{equation*}\else$  \fi
Then $U_k=\{x\in X\mid\exists t'\geq
t\mathbin. x\to^{\leq k} t'\}$ is the set of configurations that can
cover~$t$ in at most~$k$ steps.
\Cref{eq-bc} defines a chain $U_0\subseteq U_1\subseteq\cdots$ of
upwards-closed subsets of~$X$.  Furthermore, if $U_\ell=U_{\ell+1}$ at
some step, then we have reached stabilisation:
$U_\ell=U_{\ell+k}=U_\ast$ for all~$k$.  Thus we focus in this
algorithm on ascending chains $U_0\subsetneq U_1\subsetneq\cdots$,
which are finite thanks to the ascending chain condition of the
wqo~$(X,{\leq})$.  In order to turn \eqref{eq-bc} into an actual
algorithm, one needs to make some effectiveness assumptions on
$(X,{\to},{\leq})$, typically that $\leq$ is decidable and a finite
basis for $\mathrm{Pre}_\exists({\uparrow}x)$ can be computed for all
$x\in X$~\citep[Prop.~3.5]{finkel01}.

\paragraph{A Dual View.}
\Citet{lazic21} take a dual view of the algorithm
and define from~\eqref{eq-bc} a descending chain $D_0\supsetneq
D_1\supsetneq\cdots$ of the same length where
\begin{equation}\label{eq-bc-D}
  D_k\eqdef X\setminus U_k
\end{equation}
for each~$k$; this stops with $D_\ast=X\setminus U_\ast$ the
set of configurations that do \emph{not} cover~$t$.  The entire computation
in~\eqref{eq-bc} can be recast in this dual view, by setting
\begin{align}\label{eq-bc-down}
  D_0&\eqdef X \setminus {\uparrow}t\;,&
  D_{k+1}&\eqdef D_k\cap\mathrm{Pre}_\forall(D_k)\;,
\end{align}
where, for a set $S\subseteq X$,
\ifams\begin{equation*}\else$\fi
  \mathrm{Pre}_\forall(S)\eqdef\{x\in X\mid\forall y\in
X\mathbin.(x\to y\implies y\in
S)\}=X\setminus(\mathrm{Pre}_\exists(X\setminus S)). \ifams\end{equation*}\else$  \fi
Under some effectiveness assumptions, in particular for manipulating
ideal representations over~$X$, this can be turned into an actual
algorithm~\citep[Sec.~3.1]{lazic21}.

\subsection{Coverability in Vector Addition Systems}\label{sec-vas}

Vector addition systems are a well-established model for simple
concurrent processes~\citep{karp69} equivalent to Petri nets, with
far-reaching connections to many topics in theoretical computer
science.  In particular, their coverability problem, which essentially
captures safety checking, has been thoroughly investigated from both a
theoretical~\citep{karp69,lipton76,rackoff78,bozzelli11,lazic21,kunnemann23}
and a more practical~\citep{esparza14,blondin16,geffroy18,blondin21}
standpoint.

\subsubsection{Vector Addition Systems}  
A $d$-dimensional \emph{vector addition system} (VAS)~\citep{karp69}
is a finite set $\vec A$ of vectors in $\+Z^d$.  It defines a
well-structured transition system $(\+N^d,{\to_{\vec
A}},{\sqsubseteq})$ with $\+N^d$ as set of configurations and
transitions $\vec u\to_{\vec A}\vec u+\vec a$ for all $\vec u$ in
$\+N^d$ and $\vec a$ in $\vec A$ such that $\vec u+\vec a$ is
in~$\+N^d$.  We work with a unary encoding, and let $\|\vec
u\|\eqdef\max_{1\leq i\leq d}|\vec u(i)|$ and $\|\vec
A\|\eqdef\max_{\vec a\in\vec A}\|\vec a\|$ for all $\vec u\in\+Z^d$
and $\vec A\subseteq\+Z^d$ finite.

The coverability problem in vector addition systems was first shown
decidable in \citeyear{karp69} by \citet{karp69}, before being
proven \EXPSPACE-complete when $d$ is part of the input
by \citet{lipton76} and \citet{rackoff78}.  Note that the problem
parameterised by~$d$ is trivial for $d=1$ (a target $\vec t$ is
coverable if and only if $\vec s\geq\vec t$ or there 
exists $\vec a\in\vec A$ with $\vec a>0$), hence we will assume $d\geq
2$.

\subsubsection{Complexity Upper Bounds}

The dual backward coverability algorithm of \cref{sub-bc} is
straightforward to instantiate in the case of a vector addition
system.  \Cref{fig-vas} displays the computed descending chain for the
2-dimensional VAS $\vec{A}_{\div 2} \eqdef \{(-2, 1)\}$ and target
configuration $\vec t\eqdef (0,5)$~\citep[Ex.~3.6]{lazic21}.

\begin{fact}[{\citep[claims~3.9 and~4.3]{lazic21}}]\label{fc-vas-ctrl}
  The descending chain $D_0\supsetneq D_1\supsetneq\cdots$
  defined by \crefrange{eq-bc}{eq-bc-down} for a $d$-dimensional
  VAS~$\vec A$ and a target vector~$\vec t$ is $(g,n_0)$-controlled
  for $g(x)\eqdef x+\|\vec A\|$ and $n_0\eqdef\|\vec t\|$, and is
  $\omega$-monotone.
\end{fact}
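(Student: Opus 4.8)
The plan is to prove the two assertions separately. For the \emph{control} bound, let $B_k\eqdef\min_{\sqsubseteq}U_k$ be the canonical (minimal) basis of the upwards-closed set $U_k=\+N^d\setminus D_k$, so that $B_0=\{\vec t\}$ and $\max_{\vec v\in B_0}\|\vec v\|=\|\vec t\|=n_0$. In a VAS one has $\mathrm{Pre}_\exists({\uparrow}\vec v)=\bigcup_{\vec a\in\vec A}{\uparrow}(\vec v-\vec a)_{\geq 0}$, where $(\vec v-\vec a)_{\geq 0}$ clamps the negative entries of $\vec v-\vec a$ to~$0$, and each such generator has norm at most $\|\vec v\|+\|\vec A\|$. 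Since $U_{k+1}=U_k\cup\mathrm{Pre}_\exists(U_k)$ by~\eqref{eq-bc}, and since the minimal elements of an upwards-closed set generated by a finite set of vectors all belong to that set, we obtain $\max_{\vec v\in B_{k+1}}\|\vec v\|\leq\max_{\vec v\in B_k}\|\vec v\|+\|\vec A\|$, hence $\max_{\vec v\in B_k}\|\vec v\|\leq n_0+k\|\vec A\|=g^k(n_0)$ by induction on~$k$. Finally, the complementation formula for order ideals recalled in~\cref{sec-ideals} shows that every ideal $I\in D_k$ has all its finite components bounded by $\max_{\vec v\in B_k}\|\vec v\|$, whence $\|D_k\|\leq g^k(n_0)$ and the chain is $(g,n_0)$-controlled. (One can instead compute $\mathrm{Pre}_\forall^{\vec a}(I)$ directly on an ideal~$I$ --- it decomposes as the shifted ideal ``$I-\vec a$'' together with finitely many $(d-1)$-dimensional ``guard'' ideals capping a coordinate~$i$ with $\vec a(i)<0$ at $-\vec a(i)-1$ --- and read off $\|D_{k+1}\|\leq\|D_k\|+\|\vec A\|$ directly.)

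For \emph{$\omega$-monotonicity}, first note that, because $D_{k+1}\subseteq D_k$, any maximal ideal of~$D_k$ that is contained in~$D_{k+1}$ is already a maximal ideal of~$D_{k+1}$; consequently an ideal~$I$ is proper at step~$k$ \emph{iff} $I\in D_k$ and $I\not\subseteq D_{k+1}$, which by~\eqref{eq-bc-down} means $I\in D_k$ together with some $\vec u\in I$ and $\vec a\in\vec A$ such that $\vec u+\vec a\in\+N^d$ and $\vec u+\vec a\notin D_k$. Now let $I_{k+1}$ be proper at step~$k+1$ and pick any maximal ideal $I_k\in D_k$ with $I_{k+1}\subseteq I_k$; then $\omega(I_{k+1})\subseteq\omega(I_k)$, so it is enough to find such an~$I_k$ that is also proper at step~$k$. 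If $I_{k+1}\subsetneq I_k$ this is immediate: if $I_k\subseteq D_{k+1}$ then $I_k\in D_{k+1}$ by the observation just made, contradicting that $I_{k+1}$ is a maximal ideal of~$D_{k+1}$; hence $I_k\not\subseteq D_{k+1}$ and $I_k$ is proper at step~$k$.

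The remaining --- and main --- case is $I_{k+1}=I_k=:I$, i.e.\ $I$ survives from~$D_k$ into~$D_{k+1}$ yet becomes proper only at step~$k+1$. Here $I\subseteq\mathrm{Pre}_\forall(D_k)$, while properness of~$I$ at step~$k+1$ furnishes $\vec u\in I$ and $\vec a\in\vec A$ with $\vec u+\vec a\in\+N^d$ and $\vec u+\vec a\notin D_{k+1}$, i.e.\ $\vec u+\vec a\in U_{k+1}$. Since $\vec u\in I\subseteq\mathrm{Pre}_\forall(D_k)$ we get $\vec u+\vec a\in D_k$, so $\vec u+\vec a\in U_{k+1}\setminus U_k\subseteq\mathrm{Pre}_\exists(U_k)$ by~\eqref{eq-bc}, which provides $\vec b\in\vec A$ with $\vec u+\vec a+\vec b\in\+N^d$ and $\vec u+\vec a+\vec b\in U_k$. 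The point $\vec u+\vec a$ lies in some maximal ideal $I'\in D_k$, and the pair $(\vec u+\vec a,\vec b)$ then witnesses, through the criterion above, that $I'$ is proper at step~$k$. To also ensure $\omega(I)\subseteq\omega(I')$, rerun this argument with $\vec u$ replaced by $\vec u+n\sum_{i\in\omega(I)}\vec e_i$ for $n\in\+N$: these points stay in~$I$ and, $U_{k+1}$ being upwards-closed, retain all the properties above; since $\vec A$ is finite and~$D_k$ has finitely many maximal ideals, infinitely many~$n$ share a common~$\vec b$ and a common maximal ideal $I'\in D_k$ containing $\vec u+n\sum_{i\in\omega(I)}\vec e_i+\vec a$, and for that~$I'$ every coordinate in $\omega(I)$ is unbounded along the sequence, forcing $\omega(I)\subseteq\omega(I')$ while $I'$ stays proper at step~$k$.

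I expect the decisive point to be precisely this ``survive-then-die'' situation $I_{k+1}=I_k$, where the naive witness is not proper at step~$k$ and one must follow the transition~$\vec a$ and the growth $U_k\subsetneq U_{k+1}$ to a genuinely proper ideal at step~$k$, then control its set of infinite coordinates via the cofinality/pigeonhole argument sketched above. The explicit forms of $\mathrm{Pre}_\exists$ and $\mathrm{Pre}_\forall$ for a VAS, and the passage between bases of~$U_k$ and ideal decompositions of~$D_k$, are otherwise routine.
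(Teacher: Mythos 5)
Your proof is correct. Note that the paper itself does not reprove this fact (it is quoted from \citet{lazic21}, claims~3.9 and~4.3), so the comparison is with the cited argument. For the control part you follow essentially the same route: $\mathrm{Pre}_\exists({\uparrow}\vec v)$ is generated by the clamped vectors $(\vec v-\vec a)_{\geq 0}$, whose norms grow by at most $\|\vec A\|$ per step, and the bound is transferred from the minimal basis of $U_k$ to $\|D_k\|$; the only slightly glossed point is that last transfer, which needs the dual direction of the complementation formula of \cref{sec-ideals} (intersecting the ideal decompositions of the sets $\+N^d\setminus{\uparrow}\vec b$, whose finite entries are the $\vec b(i)-1$, so component-wise minima keep them bounded by the basis norms) --- this is exactly \citep[Lem.~3.8]{lazic21}, and your parenthetical direct computation of $\mathrm{Pre}_\forall$ on ideals is an equally valid shortcut. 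For $\omega$-monotonicity you genuinely diverge: the cited proof (and the route this paper takes for affine nets) goes through the generic statement reproduced here as \cref{fc-proper-tr} --- a proper $I_{k+1}$ at step $k+1$ yields $I_{k+1}\idealto I\subseteq I_k$ with $I_k$ proper at step $k$ --- and then observes that for a VAS a step maps an ideal to a translated ideal with the same set of $\omega$-components, so $\omega(I_{k+1})=\omega(I)\subseteq\omega(I_k)$. You instead argue at the level of configurations: your reduction to the ``survive-then-die'' case $I_{k+1}=I_k$, the chase of a single witness $\vec u$, $\vec u+\vec a\in U_{k+1}\setminus U_k\subseteq\mathrm{Pre}_\exists(U_k)$, and the pumping of $\vec u$ along the $\omega$-coordinates combined with pigeonhole over the finitely many transitions and maximal ideals of $D_k$ correctly forces $\omega(I)\subseteq\omega(I')$ with $I'$ proper at step $k$. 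What the ideal-level argument buys is modularity (the same \cref{fc-proper-tr} is reused verbatim for invertible and strictly increasing affine nets, where only the effect of $A$ on $\omega$-components changes); what your argument buys is self-containedness --- it avoids the $\idealto$ machinery and the computation of ${\downarrow}\mathrm{Post}_\exists(I)$ --- at the price of a compactness-style step that is specific to translations and would not transfer as cleanly to the affine settings of \cref{sec-an}.
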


The length of the descending chain defined
by \crefrange{eq-bc}{eq-bc-down} is the main source of complexity for
the whole backward coverability algorithm, and we can apply our
own \cref{th-thin} instead of~\citep[Thm.~4.4]{lazic21} in order to
prove the following bound on this length, where the combinatorics are
somewhat similar to those of~\citep[Lem.~3.5]{kunnemann23}.

\begin{theorem}\label{th-vas}
  The backward coverability algorithm terminates after at most
  $n^{2^{O(d)}}$ iterations on a $d$-dimensional VAS encoded in unary.
\end{theorem}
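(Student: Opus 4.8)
The plan is to feed \cref{fc-vas-ctrl} into our length bound \cref{th-thin} and then estimate $L_d$ for the concrete control $g(x)\eqdef x+\|\vec A\|$. By \cref{fc-vas-ctrl}, the descending chain $D_0\supsetneq D_1\supsetneq\cdots$ defined by \crefrange{eq-bc}{eq-bc-down} on a $d$-dimensional VAS~$\vec A$ with target~$\vec t$ is $(g,n_0)$-controlled with $n_0\eqdef\|\vec t\|$ and $\omega$-monotone, hence in particular strongly monotone; \cref{th-thin} therefore bounds by $L_d+1$ the number~$\ell$ of iterations of the backward coverability algorithm. It then remains to show $L_d\leq n^{2^{O(d)}}$, where $n$ is the size of the (unary) input, so that $\|\vec A\|,\|\vec t\|,d\leq n$ (and where we may assume $n$ larger than a fixed constant).

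The key idea is that, rather than unrolling the mutual recursion \eqref{eqdef-Ni}--\eqref{eqdef-Li} between the $N_i$ and the $L_i$ directly, one should track the products $\pi_i\eqdef\prod_{1\leq j\leq i}(d-j+1)(N_j+1)$, for which $L_i=\sum_{m=1}^{i}\pi_m$ and which are non-decreasing, so that $L_i\leq i\,\pi_i\leq n\,\pi_i$. Since $g$ is affine we have $N_{i+1}=g^{L_i+1}(n_0)=n_0+(L_i+1)\|\vec A\|\leq n^{O(1)}\cdot\pi_i$, and plugging this into $\pi_{i+1}=\pi_i\cdot(d-i)\cdot(N_{i+1}+1)$ yields a recursion of the clean ``square, then multiply by a polynomial'' shape $\pi_{i+1}\leq n^{O(1)}\cdot\pi_i^{2}$, with base case $\pi_1=d\,(n_0+\|\vec A\|+1)\leq n^{O(1)}$. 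A direct induction then gives $\pi_i\leq n^{2^{O(i)}}$ for all $1\leq i\leq d$ (the polynomial overhead being absorbed by a small additive slack in the exponent of the induction hypothesis), whence $L_d\leq d\,\pi_d\leq n^{2^{O(d)}}$, and finally $\ell\leq L_d+1\leq n^{2^{O(d)}}$.

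The step that needs care is precisely this reformulation in terms of the $\pi_i$. The naive alternative is to carry a single uniform bound $B_i$ on all the $L_j+1$ and $N_j+1$ with $j\leq i$; this only gives a recurrence of the form $B_{i+1}\leq n^{O(1)}\cdot B_i^{\,d}$ (as $\prod_{j\leq i+1}(N_j+1)$ is bounded by $B_i^{\,d}$ up to a polynomial factor), and its exponent being $d$ rather than $2$, unrolling it produces only the coarser $n^{2^{O(d\lg d)}}$ bound of the earlier analyses---the spurious $\lg d$ arising exactly from replacing the product of the $i$ distinct factors $N_j+1$ by the $i$-th power of their maximum at each of the $d$ steps. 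Keeping $\pi_i$ as a single growing quantity, so that passing from step $i$ to step $i+1$ introduces only the two new factors $(d-i)\leq n$ and $(N_{i+1}+1)$---with the latter already polynomial in $\pi_i$---is what keeps the double exponent linear in~$d$; the remaining bookkeeping (pinning down the constants, and verifying the base case and the inductive estimate) is routine arithmetic.
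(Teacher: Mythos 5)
Your proposal is correct and follows essentially the same route as the paper: invoke \cref{fc-vas-ctrl} to get a $(g,n_0)$-controlled, $\omega$-monotone (hence strongly monotone) chain, apply \cref{th-thin} to bound the number of iterations by $L_d+1$, and then bound $L_d$ by an induction on the recursion \eqref{eqdef-Ni}--\eqref{eqdef-Li}. The only divergence is bookkeeping: the paper's \cref{cl-thvas} inducts directly on $L_i$, proving $L_i+4\leq n^{3^i\cdot(\lg d+1)}$, whereas you track the product terms $\pi_i=\prod_{1\leq j\leq i}(d-j+1)(N_j+1)$ via the squaring recursion $\pi_{i+1}\leq n^{O(1)}\pi_i^2$ — an equally valid (and marginally tighter in the exponent's base) way of obtaining $L_d\leq n^{2^{O(d)}}$.
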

\begin{proof}
  Let $n$ be the size of the input to the coverability problem; we
  assume in the following that $n,d\geq 2$.
  By \cref{fc-vas-ctrl} and due to the unary encoding, the descending
  chain $D_0\supsetneq D_1\supsetneq\cdots\supsetneq D_\ell=D_\ast$ is
  $(g,n_0)$-controlled for $g(x)\eqdef x+n$ and $n_0\eqdef n$, and is
  $\omega$-monotone and thus strongly monotone.  By \cref{th-thin},
  $\ell\leq L_{d}+1$.  Let us bound this value.

  \begin{restatable}{claim}{clthvas}\label{cl-thvas}
    Let $g(x)\eqdef x+n$ and $n_0\eqdef n$.  Then, for all $i\leq d$,
    \begin{align*}
    N_{i+1}&= n\cdot(L_i+2)\;,& L_i+4&\leq n^{3^i\cdot(\lg d+1)}\;.
    \end{align*}
  \end{restatable}  \begin{claimproof}[Proof of \Cref{cl-thvas}]
    \ifams\renewcommand{\qedsymbol}{\tiny[\ref{cl-thvas}]}\fi
  In the case of $N_{i+1}$, by
  the definition of $N_{i+1}$ in~\eqref{eqdef-Ni},
  $N_{i+1}=g^{L_{i}+1}(n_0)=n+(L_{i}+1)\cdot n=n\cdot(L_{i}+2)$ as
  desired.
  
  Regarding~$L_i$, we proceed by induction over~$i$.  For the base case
  $i=0$, $L_0+4=4\leq n^{3^0\cdot(\lg d+1)}$ since we assumed $n,d\geq 2$.  For
  the induction step, by the definition of $L_{i+1}$ in~\eqref{eqdef-Li}
  \begin{align*}
  L_{i+1}+4&=L_i+4+\prod_{0\leq j\leq i}(d-j)(N_{j+1}+1)\notag\\
  &\leq L_i+4+\prod_{0\leq j\leq i}(d-j)\cdot n\cdot(L_j+3)\notag\\
  &\leq 2\cdot(dn)^{i+1}\cdot\prod_{0\leq j\leq i}(L_j+3)\;.
  \intertext{Here, since $n\geq 2$,}
  2\cdot(dn)^{i+1} &\leq n^{(i+1)(\lg d+1)+1}
  \intertext{and by induction hypothesis for $j\leq i$}
  \prod_{0\leq j\leq i}(L_j+3)&\leq n^{\sum_{0\leq j\leq i}3^j(\lg d+1)}\;.
  \intertext{Thus, it only remains to see that, since $i>0$,}
  3^{i+1}\cdot(\lg d+1)&=(1+2\cdot\sum_{0\leq j\leq
    i}3^j)\cdot(\lg d+1)\notag\\
  &\geq (1+3^0+3^i)\cdot(\lg d+1)+\sum_{0\leq j\leq
    i}3^j\cdot(\lg d+1)\notag\\
  &\geq (i+1)\cdot(\lg d+1)+1+\sum_{0\leq j\leq
    i}3^j\cdot(\lg d+1)\;.\ifams\claimqedhere\else\\[-5.1em]\fi \end{align*}
  \end{claimproof}\ifams\relax\else\medskip\fi
   Thus $L_d+1\leq n^{3^d\cdot(\lg d+1)}$ by~\cref{cl-thvas}, which is
  in~$n^{2^{O(d)}}$.
\end{proof}

\begin{remark}[Branching or alternating vector addition systems]\label{rk-abvas}
The improved upper bound parameterised by the dimension~$d$
in \cref{th-vas} also applies to some extensions of vector addition
systems, for which \citet{lazic21} have shown that the backward
coverability algorithm was constructing an $\omega$-monotone
descending chain controlled as in \cref{fc-vas-ctrl}, namely
\begin{itemize}
\item in \citep[claims~6.7 and~6.8]{lazic21} for bottom-up
  coverability in branching vector addition systems~(BVAS)---which
  is \ComplexityFont{2EXP}-complete~\citep{demri12}---, and
\item in \citep[claims~5.4 and~5.5]{lazic21} for top-down coverability
  in alternating vector addition systems~(AVAS)---which
  is \ComplexityFont{2EXP}-complete as well~\citep{courtois14}.\hfill\qedsymbol
\end{itemize}
\end{remark}

Recall that $U_\ell$ is the set of configurations that can cover the
target~$\vec t$ in at most~$\ell$ steps, hence \cref{th-vas} provides
an alternative proof for \citep[Thm.~3.3]{kunnemann23}: if there exists a
covering execution, then there is one of length in $n^{2^{O(d)}}$, from which an
algorithm in $n^{2^{O(d)}}$ follows by~\citep[Thm.~3.2]{kunnemann23}.
Regarding the optimality of \cref{th-vas}, recall
that \citet{lipton76} shows an $n^{2^{\Omega(d)}}$ lower bound on the
length of a minimal covering execution, which translates into the same
lower bound on the number~$\ell$ of iterations of the backward
coverability algorithm~\citep[Cor.~2]{bozzelli11}.
Finally, this also yields an improved upper bound on the
complexity of the (original) backward coverability algorithm.  Here,
we can rely on the analysis performed by \citet[Sec.~3]{bozzelli11}
and simply replace \citeauthor{rackoff78}'s $n^{2^{O(d\lg d)}}$ bound on
the length of minimal covering executions by the bound
from \cref{th-vas}.
\begin{corollary}\label{cor-bc-vas}
  The backward coverability algorithm runs in time $n^{2^{O(d)}}$
  on $d$-dimensional VAS encoded in unary.
\end{corollary}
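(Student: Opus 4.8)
The plan is to rerun \citeauthor{bozzelli11}'s complexity analysis of the (primal) backward coverability algorithm~\citep[Sec.~3]{bozzelli11}, substituting the improved iteration count of \cref{th-vas} for \citeauthor{rackoff78}'s bound. Recall that on a $d$-dimensional VAS~$\vec A$ and target~$\vec t$, the recurrence~\eqref{eq-bc} maintains an ascending chain $U_0\subsetneq U_1\subsetneq\cdots\subsetneq U_\ell=U_\ast$, each $U_k$ stored as a finite basis $B_k$ of minimal, pairwise incomparable vectors of $\+N^d$, halting once $B_{\ell+1}$ contributes no new minimal vector, followed by a test $\vec s\in{\uparrow}B_\ast$. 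Three quantities govern the running time: the number~$\ell$ of iterations, the magnitude $M\eqdef\max\{\|\vec v\|:\vec v\in\bigcup_k B_k\}$ of the vectors handled, and the cardinalities $|B_k|$ of the bases. Once all three are bounded by $n^{2^{O(d)}}$, a single iteration---for every $\vec v\in B_k$ and every $\vec a\in\vec A$ form ${\uparrow}\max(\vec v-\vec a,\vec{0})$, take the union with $B_k$, and re-minimise---costs time polynomial in $|B_k|$, $|\vec A|$, $d$ and $\lg M$, so the whole computation, including the final membership test, runs in time $\ell\cdot\mathrm{poly}(n^{2^{O(d)}})=n^{2^{O(d)}}$.

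First I would take $\ell\le n^{2^{O(d)}}$ directly from \cref{th-vas}; no detour through minimal covering executions is needed, since that theorem already bounds the iteration count. Next, by \cref{fc-vas-ctrl} the dual chain $D_k=\+N^d\setminus U_k$ is $(g,n_0)$-controlled with $g(x)=x+n$ and $n_0=n$, so $\|D_k\|\le g^\ell(n_0)=n(\ell+1)$, which is in $n^{2^{O(d)}}$; since every $\vec v\in B_k=\min_{\sqsubseteq}\+N^d\setminus D_k$ satisfies $\vec v(i)\le\|D_k\|+1$ for each coordinate~$i$ (alternatively, one may invoke the sharper ``nearly thin'' estimate $\vec v(i)\le N_d+1$ furnished by \cref{cor-thin} and \cref{prop-thin-filter}), we get $M$ in $n^{2^{O(d)}}$. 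Finally, the number of pairwise incomparable vectors of $\+N^d$ with all coordinates at most~$M$ is at most $(M+1)^d$, so $|B_k|$ is in $n^{2^{O(d)}}$ as well. Substituting these bounds into the per-iteration estimate and summing over the at most $n^{2^{O(d)}}$ iterations yields the claimed running time.

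I do not expect a genuine obstacle: all the combinatorial difficulty has already been absorbed into \cref{th-thin} and \cref{th-vas}, and what remains is the routine verification that manipulating finite filter bases for a VAS is cheap and that the representations do not grow across iterations---the latter being exactly what controlledness (\cref{fc-vas-ctrl}) guarantees. The only small point worth stating explicitly is that the primal chain stabilises at the same step as \citeauthor{lazic21}'s dual chain, which is immediate from $U_k=\+N^d\setminus D_k$: one has $U_k=U_{k+1}$ iff $D_k=D_{k+1}$, so the iteration bound of \cref{th-vas} applies verbatim to the algorithm run in its primal form.
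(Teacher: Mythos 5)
Your proposal is correct and follows essentially the same route as the paper: take the iteration bound from \cref{th-vas}, bound the norms of the basis vectors via the control (the paper quotes \citeauthor{bozzelli11}'s $\|\vec v'\|\leq g(\|\vec v\|)$, which amounts to the same thing), bound $|B_k|$ by counting vectors with bounded entries, and observe that each iteration is polynomial in these quantities. The paper additionally records the sharper ``nearly thin'' bound on $|B_k|$ via \cref{cor-thin,prop-thin-filter}, which you correctly note as an optional refinement rather than a necessity for the $n^{2^{O(d)}}$ statement.
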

\begin{proof}
Let $n$ be the size of the input to the coverability problem and $U_0\subsetneq
U_1\subsetneq\cdots\subsetneq U_\ell=U_\ast$ be the ascending chain
constructed by the backward coverability according to \eqref{eq-bc}.
By \cref{th-vas}, $\ell$ is in $n^{2^{O(d)}}$.

Let $B_k\eqdef\min_\sqsubseteq U_k$ be the minimal basis at each
step~$k$.  The algorithm computes $B_{k+1}$ from $B_k$ as per
\eqref{eq-bc} by computing $\min_\sqsubseteq\mathrm{Pre}_\exists({\uparrow}\vec v)$
for each $\vec v\in B_k$, adding the elements of $B_k$, and removing
any non-minimal vector.  Thus each step can be performed in time
polynomial in~$n$, $d$, and the number of vectors in~$B_k$.  Here,
\citeauthor{bozzelli11}'s analysis in \citep[Sec.~3]{bozzelli11} shows
that $\|\vec v'\|\leq g(\|\vec v\|)$ for all $\vec
v'\in \min_\sqsubseteq\mathrm{Pre}_\exists({\uparrow}\vec v)$, 
yielding a bound of $|B_k|\leq (g^k(n)+1)^d\leq ((\ell + 1)\cdot n +
1)^d$, which is still in $n^{2^{O(d)}}$.

We can do slightly better.  By \cref{cor-thin}, all the ideals in
the canonical decomposition of $D_k\eqdef\+N^d\setminus U_k$ are thin,
and in turn \cref{prop-thin-filter} shows that all the vectors
in~$B_k$ are nearly thin.  Accordingly, let us denote by
$\mathrm{Fil}^{\mathsf{thin+1}}(\+N^d)$ the set of order filters
${\uparrow}\vec v$ such that $\vec v$ is nearly thin.  Then
$|B_k|\leq|\mathrm{Fil}^{\mathsf{thin+1}}(\+N^d)|$, and the latter is
in~$n^{2^{O(d)}}$:
\begin{align}
  |\mathrm{Fil}^{\mathsf{thin+1}}(\+N^d)|
  &\leq d!\cdot\prod_{1\leq i\leq d}(N_i+2)\notag\\
  &\leq d!\cdot n^d\cdot\prod_{0\leq i\leq d-1}(L_i+4)\tag{by \cref{cl-thvas} on $N_i$}\\
  &\leq n^{2d+\sum_{0\leq i\leq d-1}3^i\cdot(\lg d+1)}\tag{because $d\leq n$ and by \cref{cl-thvas} on $L_i$}\\
  &\leq n^{3^d\cdot(\lg d+1)}\;.\label{eq-fil}
\end{align}

Therefore, the overall complexity of the backward coverability
algorithm is polynomial in~$\ell$,
$\max_{0\leq k\leq \ell}|B_k|$, $n$, and~$d$, which is
in $n^{2^{O(d)}}$.
\ifams\par
Observe that the dual version of the backward coverability
algorithm enjoys the same upper bound: at each step~$k$, the algorithm
computes $D_{k+1}$ from $D_k$ as per \eqref{eq-bc-down}; this
computation of~$D_{k+1}$ can be performed in time polynomial in~$n$,
$d$, and the number of ideals in the canonical decomposition of
$D_k$~\citep[Sec.~3.2.1]{lazic21}.  By \cref{cor-thin,eq-card-thin},
$|D_k|\leq|\mathrm{Idl}^{\mathsf{thin}}(\+N^d)|\leq 1+L_d$, hence the
overall complexity of the dual algorithm is polynomial in~$L_d$, $n$,
and~$d$, which is still in $n^{2^{O(d)}}$.\fi
\end{proof}

The bounds in $n^{2^{O(d)}}$ for $\|\vec v\|\leq N_d+1$ for all $\vec
v\in\min_\sqsubseteq U_k$ and for $|\min_\sqsubseteq
U_k|\leq|\mathrm{Fil}^{\mathsf{thin+1}}(\+N^d)|$ in the previous proof
also improve on the corresponding bounds in~\citep[Thm.~9]{yen09}
and~\citep[Thm.~2]{bozzelli11}.  Recall
that \citet[Thm.~4.2]{kunnemann23} show that, assuming the exponential
time hypothesis, there does not exist a deterministic $n^{o(2^d)}$
time algorithm deciding coverability in unary encoded $d$-dimensional
VAS, hence the backward coverability algorithm
is conditionally optimal.
 
\subsection{Coverability in Affine Nets}\label{sec-an}
Affine nets~\cite{finkel04}, also known as affine vector addition
systems, are a broad generalisation of VAS and Petri nets encompassing
multiple extended VAS operations designed for greater modelling power.

\subsubsection{Affine Nets}
A $d$-dimensional (well-structured) \emph{affine net}~\cite{finkel04} is
a finite set $\?N$ of triples $(\vec a,A,\vec
b)\in\+N^d\times\+N^{d\times d}\times\+N^d$.  It defines a
well-structured transition system $(\+N^d,{\to_{\?N}},{\sqsubseteq})$
with $\+N^d$ as set of configurations and transitions $\vec u\to_{\?N}
A\cdot(\vec u-\vec a)+\vec b$ for all~$\vec u$ in~$\+N^d$ and $(\vec
a,A,\vec b)$ in~$\?N$ such that $\vec u-\vec a$ is in~$\+N^d$.  This
model encompasses notably
\begin{itemize}
  \item VAS and Petri nets when (each such)~$A$ is the identity matrix
  $I_d$,
  \item \emph{reset nets}~\cite{araki76,arnold78} when~$A$ is component-wise smaller
  or equal to~$I_d$,
  \item \emph{transfer nets}~\cite{ciardo94} when the sum of values in every
  column of~$A$ is one,
  \item \emph{post self-modifying nets}~\citep{valk78}---also known
  as \emph{strongly increasing affine
  nets}~\citep{finkel04,bonnet12}---when $A$ is component-wise larger
  or equal to~$I_d$, and
 \item \emph{invertible} affine nets~\citep{benedikt17} when $A$ is
  invertible over the rationals, i.e., $A\in\mathsf{GL}_d(\+Q)$.
\end{itemize}

As in the case of VAS, we will work with a unary encoding, and we let
$\|\?N\|\eqdef\max\{\|\vec a\|\mid (\vec a,A,\vec
b)\in\?N\}$; note that the entries from~$\vec b$ and~$A$ are not taken
into account.

\begin{example}\label{ex-an}
  Consider the affine nets
  \begin{align*}\hspace*{-1.2em}
    \?N_1&\eqdef\left\{\!{\footnotesize\left[\!\begin{array}{c}2\\0\end{array}\!\right],\left[\!\begin{array}{cc}1&0\\0&1\end{array}\!\right],\left[\!\begin{array}{c}0\\1\end{array}\!\right]}\!\right\}\\[.5em]
    \?N_2&\eqdef\left\{\!{\footnotesize\left[\!\begin{array}{c}0\\0\end{array}\!\right],\left[\!\begin{array}{cc}1&1\\0&0\end{array}\!\right],\left[\!\begin{array}{c}0\\0\end{array}\!\right]}\!\right\}\\[.5em]
    \?N_3&\eqdef\left\{\!{\footnotesize\left[\!\begin{array}{c}0\\0\end{array}\!\right],\left[\!\begin{array}{cc}1&1\\2&0\end{array}\!\right],\left[\!\begin{array}{c}0\\0\end{array}\!\right]}\!\right\}.  \end{align*}
Then $\?N_1$  defines the same WSTS as the 2-dimensional VAS $\vec A_{\div
  2}=\{(-2,1)\}$.  Focusing on the effects of their transition matrices,
  $\?N_2$
performs a transfer from its second component into its
  first component, while $\?N_3$ sums the values of its first two components into the first one, and
  puts the double of its first component into its second one.\hfill\qedsymbol
\end{example}

The coverability problem for reset VAS was first shown decidable
in \citeyear{arnold78} by \citet*{arnold78} using the backward
coverability algorithm, and the same algorithm applies to all affine
nets~\citep{dufourd98,finkel04}.  Its complexity is considerable:
their coverability problem has already an Ackermannian complexity in
the reset or transfer
cases~\citep{schnoebelen10,figueira11,icalp/Schmitz19}.  In the
strongly increasing case, \citet*[Lem.~11 and Thm.~13]{bonnet12} show
how to adapt \citeauthor{rackoff78}'s original argument to derive an
upper bound in~$n^{2^{O(d\lg d)}}$ on the length of minimal
coverability witnesses, with an \EXPSPACE\ upper bound for the problem
when~$d$ is part of the input, while in the invertible
case, \citet[Thm.~6]{benedikt17} show a \ComplexityFont{2EXPSPACE}
upper bound.

\paragraph{Control.}
Before we turn to the case of invertible affine nets, let us show that
the descending chains defined by the backward coverability algorithm
for affine nets are controlled, with a control very similar to the VAS
case (c.f.\ \cref{fc-vas-ctrl}).
\begin{proposition}\label{fc-an-ctrl}
  The descending chain $D_0\supsetneq D_1\supsetneq\cdots$ defined
  by \crefrange{eq-bc}{eq-bc-down} for a $d$-dimensional affine net~$\?N$ and a target
  vector~$\vec t$ is $(g,n_0)$-controlled for $g(x)\eqdef
  x+\|\?N\|$ and $n_0\eqdef\|\vec t\|$.
\end{proposition}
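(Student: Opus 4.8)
The plan is to prove by induction on $k$ that $\|D_k\|\le g^k(n_0)$, using the recurrence $D_{k+1}=D_k\cap\mathrm{Pre}_\forall(D_k)$ from \eqref{eq-bc-down}. For the base case, the ideals in the canonical decomposition of $D_0=\+N^d\setminus{\uparrow}\vec t$ are exactly the vectors $(\omega,\dots,\omega,\vec t(i)-1,\omega,\dots,\omega)$ for those $i$ with $\vec t(i)>0$, so $\|D_0\|\le\|\vec t\|-1<n_0$ (and $D_0=\emptyset$ when $\vec t=\vec 0$). The inductive step reduces to showing that $\|\mathrm{Pre}_\forall(D)\|\le\|D\|+\|\?N\|=g(\|D\|)$ for every downwards-closed $D$: indeed the ideals of $D_k\cap\mathrm{Pre}_\forall(D_k)$ are among the meets $I\wedge I'$ with $I\in D_k$, $I'\in\mathrm{Pre}_\forall(D_k)$, and since $\fin(I\wedge I')=\fin(I)\cup\fin(I')$ and $(I\wedge I')(i)=\min(I(i),I'(i))$ one gets $\|I\wedge I'\|\le\max(\|I\|,\|I'\|)$, hence $\|D_{k+1}\|\le\max(\|D_k\|,\|\mathrm{Pre}_\forall(D_k)\|)\le g(\|D_k\|)$.

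To bound $\mathrm{Pre}_\forall(D)$, I would work transition by transition. Writing $f_t(\vec u)\eqdef A(\vec u-\vec a)+\vec b$ for the partial monotone map associated with $t=(\vec a,A,\vec b)\in\?N$, whose domain is ${\uparrow}\vec a$, unfolding the definitions gives $\mathrm{Pre}_\forall(D)=\bigcap_{t\in\?N}\mathrm{Pre}_\forall^{t}(D)$ with $\mathrm{Pre}_\forall^{t}(D)=(\+N^d\setminus{\uparrow}\vec a)\cup\bigcup_{I\in D}{\downarrow}f_t^{-1}(I)$; each $\mathrm{Pre}_\forall^{t}(D)$ is downwards-closed, as it is a $\mathrm{Pre}_\forall$ in the sub-WSTS using only transition~$t$, whose compatibility follows from $A$ having non-negative entries. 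Since the ideals of a finite union of downwards-closed sets are among the ideals of the summands, and those of a finite intersection are obtained by taking meets $\bigwedge_t I_t$ with $\|\bigwedge_t I_t\|\le\max_t\|I_t\|$, it suffices to bound $\|\+N^d\setminus{\uparrow}\vec a\|$ and $\|{\downarrow}f_t^{-1}(I)\|$ for each $I\in D$. The former is at most $\|\vec a\|-1<\|\?N\|$, its ideals being again coordinate‑capping ideals as in the base case.

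The heart of the argument is the bound $\|{\downarrow}f_t^{-1}(I)\|\le\|\vec a\|+\|I\|\le\|\?N\|+\|D\|$. Substituting $\vec v=\vec u-\vec a$, one obtains $f_t^{-1}(I)=\vec a+S$ with $S\eqdef\{\vec v\in\+N^d\mid\forall i\in\fin(I).\,(A\vec v)(i)\le I(i)-\vec b(i)\}$ (the empty set if some right‑hand side is negative), so that the ideals of ${\downarrow}(\vec a+S)$ are the shifts $\vec a+J$ of the ideals $J$ of $S$, with $\fin(\vec a+J)=\fin(J)$ and $(\vec a+J)(j)=\vec a(j)+J(j)$. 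The key observation is that in any ideal $J$ of $S$, a coordinate $j$ with $J(j)<\omega$ must occur with a non‑zero — hence $\ge 1$ — coefficient $A_{ij}$ in some constraint indexed by $i\in\fin(I)$, whence $J(j)\le I(i)-\vec b(i)\le\|I\|$; conversely, if $A_{ij}=0$ for all $i\in\fin(I)$ then coordinate $j$ is unconstrained in $S$ and every ideal $J$ of $S$ has $J(j)=\omega$. Thus $\|S\|\le\|I\|$, giving $\|{\downarrow}f_t^{-1}(I)\|\le\|\vec a\|+\|I\|$; combining the pieces yields $\|\mathrm{Pre}_\forall^{t}(D)\|\le\|D\|+\|\?N\|$ for each $t$, hence $\|\mathrm{Pre}_\forall(D)\|\le\|D\|+\|\?N\|$, and the induction goes through.

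The main obstacle is precisely this last point: the matrix $A$ is \emph{not} bounded by $\|\?N\|$ in the unary encoding, so one might fear a blow‑up of the finite components under $f_t^{-1}$. It does not occur, because taking preimages under a non‑negative matrix can only \emph{divide} (a positive entry $A_{ij}$ is at least $1$, so $(A\vec v)(i)\le I(i)-\vec b(i)$ forces $\vec v(j)\le\|I\|$), while a column of $A$ that vanishes on the finite coordinates of $I$ merely turns that coordinate into an $\omega$‑component rather than enlarging a finite one; similarly a large $\vec b(i)$ can only shrink the feasible region or make it empty. This is also why only the additive shift data $\vec a$ — and not $A$ or $\vec b$ — enters the control function, exactly mirroring the VAS case of \cref{fc-vas-ctrl}.
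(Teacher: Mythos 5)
Your proof is correct, but it runs on the opposite side of the duality from the paper's. The paper never touches ideal decompositions of $\mathrm{Pre}_\forall$ directly: it rewrites $\mathrm{Pre}_\forall(S)=\+N^d\setminus\mathrm{Pre}_\exists(\+N^d\setminus S)$, proves the single claim that every $\vec u'\in\min_\sqsubseteq\mathrm{Pre}_\exists({\uparrow}\vec u)$ satisfies $\|\vec u'\|\leq\|\vec u\|+\|\?N\|$ (\cref{cl-an-ctrl}) by observing that a $\sqsubseteq$-minimal solution $\vec x$ of $A\vec x\sqsupseteq\vec y$ has $\|\vec x\|\leq\|\vec y\|$ --- any component exceeding $\|\vec y\|$ can be truncated because a positive entry of $A$ is at least $1$ --- and then transfers the bound between a downwards-closed set and the minimal basis of its complement by three applications of \citep[Lem.~3.8]{lazic21}. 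You instead stay on the ideal side throughout: you decompose $\mathrm{Pre}_\forall$ transition by transition as $(\+N^d\setminus{\uparrow}\vec a)\cup\bigcup_{I\in D}{\downarrow}f_t^{-1}(I)$, track sizes through unions, meets and the shift by $\vec a$, and your key step --- a finitely-constrained coordinate of a maximal ideal of $\{\vec v\mid A\vec v\sqsubseteq\vec c\}$ is forced below $\|\vec c\|$ because the relevant entry of $A$ is at least $1$, while an unconstrained coordinate is $\omega$ in every maximal ideal --- is exactly the order-dual of the paper's truncation argument for minimal solutions of $A\vec x\sqsupseteq\vec y$. Both proofs hinge on the same phenomenon (non-negative integer matrices let only the guard $\vec a$, not $A$ or $\vec b$, enter the control). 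What the paper's route buys is brevity: the filter-side claim is one short computation and all ideal bookkeeping is delegated to \citep[Lem.~3.8]{lazic21}. What your route buys is a self-contained, explicitly effective description of the ideal decomposition of $\mathrm{Pre}_\forall(D)$ for affine nets --- essentially the data needed to implement the dual algorithm of \cref{sub-bc} --- at the cost of verifying the union/intersection/shift identities by hand. One small point of care, which you did handle: the bound $J(j)\leq\|I\|$ is only claimed for ideals in the canonical decomposition (maximal ideals) of $S$, since a non-maximal ideal could have a spuriously finite unconstrained coordinate; your ``conversely'' clause is what rules this out.
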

\begin{proof}
  Rather than handling $\mathrm{Pre}_\forall$ computations directly,
  we use the fact that
  $\mathrm{Pre}_\forall(S)=\+N^d\setminus(\mathrm{Pre}_\exists(\+N^d\setminus
  S))$ for all~$S\subseteq\+N^d$ and the following statement on
  $\mathrm{Pre}_\exists$ computations.
  \begin{claim}\label{cl-an-ctrl}
    If $\vec
  u'\in\min_{\sqsubseteq}\mathrm{Pre}_\exists({\uparrow}\vec u)$,
  then $\|\vec u'\|\leq\|\vec u\|+\|\?N\|$.
  \end{claim}
  \begin{claimproof}[Proof of \cref{cl-an-ctrl}]
    \ifams\renewcommand{\qedsymbol}{\tiny[\ref{cl-an-ctrl}]}\fi
    In such a situation, there exists a
    triple $(\vec a,A,\vec b)\in\?N$ such that $\vec u'\sqsupseteq\vec
  a$ and
    $A\cdot(\vec u'-\vec a)\sqsupseteq\vec u-\vec b$.  Let $\vec y$ be defined by
    $\vec y(i)\eqdef\max(\vec u(i),\vec b(i))-\vec b(i)$ for all
    $1\leq i\leq d$, thus of size $\|\vec y\|\leq\|\vec u\|$.  Then $\vec u'=\vec
    x+\vec a$ where $\vec x$ is a $\sqsubseteq$-minimal solution of the
    system of inequalities $A\vec x\sqsupseteq\vec y$.

    We are going to show that if $\vec x$ is an $\sqsubseteq$-minimal
    solution, then $\|\vec x\|\leq\|\vec y\|$.  This will yield the result, as then $\|\vec u'\|\leq \|\vec y\|+\|\vec a\|\leq \|\vec u\|+\|\?N\|$.
Assume by contradiction that $\vec x$ is a $\sqsubseteq$-minimal
    solution with $\vec x(j)>\|\vec y\|$ for some $1\leq j\leq d$.
    Consider $\vec x'$ defined by $\vec x'(j)\eqdef\|\vec y\|$ and
    $\vec x'(i)\eqdef\vec x(i)$ for all $i\neq j$; note that $\vec
    x'\sqsubsetneq\vec x$.  Let us show that $\vec x'$ is also a
    solution, i.e., that $A\vec x'\sqsupseteq\vec y$: for all $1\leq i\leq d$,
    \begin{itemize}
    \item if $A(i,j)>0$ then $\sum_{1\leq k\leq d}A(i,k)\cdot\vec
    x'(k)\geq\vec{x}'(j)\geq\|\vec y\|\geq\vec y(i)$, and
    \item otherwise $\sum_{1\leq k\leq d}A(i,k)\cdot\vec
    x'(k)=\sum_{1\leq k\leq d}A(i,k)\cdot\vec x(k)\geq\vec y(i)$ since
    $\vec x$ is a solution.
    \end{itemize}
    Thus $\vec x'$ is a solution, contradicting the $\sqsubseteq$-minimality
    of~$\vec x$.
  \end{claimproof}

  Now, since $D_0=\+N^d\setminus{\uparrow}\vec t$, $\|D_0\|\leq\|\vec
  t\|-1$ by \citep[Lem.~3.8]{lazic21}.  Regarding the control function,
  $D_{k+1}=D_k\cap\mathrm{Pre}_{\forall}(D_k)$ is such that
  $\|D_{k+1}\|\leq\max(\|D_k\|,\|\mathrm{Pre}_\forall(D_k)\|)$ also
  by \citep[Lem.~3.8]{lazic21}.  In turn, regarding
  $\mathrm{Pre}_\forall(D_k)=\+N^d\setminus\mathrm{Pre}_\exists(U_k)$,
  the minimal elements $\vec u$ of $U_k=\+N^d\setminus D_k$ have size
  $\|\vec u\|\leq\|D_k\|+1$ still by \citep[Lem.~3.8]{lazic21}, thus
  the minimal elements $\vec u'$ of $\mathrm{Pre}_\exists(U_k)$ have
  size $\|\vec u'\|\leq\|D_k\|+1+\|\?N\|$ by \cref{cl-an-ctrl}, hence
  $\|\mathrm{Pre}_\forall(D_k)\|\leq \|D_k\|+\|\?N\|$ by a last
  application of~\citep[Lem.~3.8]{lazic21}.
\end{proof}

\subsubsection{Invertible Affine Nets}\label{sub-inv}

The restriction to invertible
affine nets~\citep{benedikt17} is somehow
orthogonal to the usual restrictions to reset/transfer/post self-modifying/\dots\ nets.
For instance, in \cref{ex-an}, the identity matrix in~$\?N_1$ is
clearly invertible, and the transfer matrix in~$\?N_2$ is not.  More
generally, reset nets are never invertible (when they perform resets),
and transfer nets are invertible exactly when their matrices are
permutation matrices.  Nevertheless, some more involved affine nets
are invertible, like $\?N_3$ in \cref{ex-an}, whose matrix is
invertible with inverse
${\footnotesize\left[\!\begin{array}{cc}0&1/2\\1&-1/2\end{array}\!\right]}$.

\paragraph{Strong Monotonicity.}
When dealing with a descending sequence of downwards-closed sets
produced by the dual backward coverability algorithm for WSTS, a key
observation made in~\citep{lazic21} allows to sometimes derive
monotonocity.
For this, in a WSTS $(X,{\to},{\leq})$, define
$\mathrm{Post}_\exists(S)\eqdef\{y\in X\mid\exists x\in S\mathbin.x\to
y\}$.  Following~\cite{blondin18}, for two order ideals $I$ and
$I'$, write $I\idealto I'$ if $I'$ appears in the canonical decomposition
of ${\downarrow}\mathrm{Post}_\exists(I)$.

\begin{fact}[{\citep[Claim~4.2]{lazic21}}]\label{fc-proper-tr}
  Let $D_0\supsetneq D_1\supsetneq\cdots$ be a descending chain of
  downwards-closed sets defined by~\crefrange{eq-bc}{eq-bc-down}.  If
  $I_{k+1}$ is a ideal proper at step~$k+1$, then there exists an order
  ideal~$I$ and an order ideal~$I_k$ proper at step~$k$ such that
  $I_{k+1}\idealto I\subseteq I_k$.
\end{fact}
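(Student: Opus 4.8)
The plan is to unwind the recursion~\crefrange{eq-bc}{eq-bc-down} and extract a single ``witness'' configuration~$y$ that escapes~$D_{k+1}$; this witness will pin down both the intermediate ideal~$I$ and the proper ideal~$I_k$ at once. First I would translate properness of~$I_{k+1}$ at step~$k+1$ into a transition. Since $I_{k+1}\in D_{k+1}$, it is in particular a subset of~$D_{k+1}$, while $I_{k+1}\notin D_{k+2}$: indeed, if $I_{k+1}\subseteq D_{k+2}$ held, then, because $D_{k+2}\subseteq D_{k+1}$ and $I_{k+1}$ is a maximal ideal of~$D_{k+1}$, it would be a maximal ideal of~$D_{k+2}$ too, i.e., $I_{k+1}\in D_{k+2}$, a contradiction. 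Hence $I_{k+1}\not\subseteq D_{k+2}=D_{k+1}\cap\mathrm{Pre}_\forall(D_{k+1})$, and since $I_{k+1}\subseteq D_{k+1}$ this forces $I_{k+1}\not\subseteq\mathrm{Pre}_\forall(D_{k+1})$; so there are $x\in I_{k+1}$ and $y\notin D_{k+1}$ with $x\to y$.

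The key observation is then that $\mathrm{Post}_\exists(I_{k+1})\subseteq D_k$: indeed $I_{k+1}\subseteq D_{k+1}=D_k\cap\mathrm{Pre}_\forall(D_k)\subseteq\mathrm{Pre}_\forall(D_k)$, so every configuration reachable in one step from~$I_{k+1}$ lies in~$D_k$, and $D_k$ being downwards-closed we even get ${\downarrow}\mathrm{Post}_\exists(I_{k+1})\subseteq D_k$. Now the witness~$y$ belongs to $\mathrm{Post}_\exists(I_{k+1})$ (since $x\in I_{k+1}$ and $x\to y$), hence to this downwards-closed set, which is therefore non-empty and has a canonical decomposition into maximal ideals; I would let~$I$ be one of them containing~$y$. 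Then $I_{k+1}\idealto I$ by definition of~$\idealto$, and $I\subseteq D_k$, so there is a maximal ideal $I_k\in D_k$ with $I\subseteq I_k$. Finally $y\in I\subseteq I_k$ but $y\notin D_{k+1}$, whence $I_k\not\subseteq D_{k+1}$ and therefore $I_k\notin D_{k+1}$: the ideal~$I_k$ is proper at step~$k$, and we have exhibited $I_{k+1}\idealto I\subseteq I_k$ as required.

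The whole argument is short, and only relies on the downward-closure of the~$D_k$ and on the shape of the recursion defining them from~$\mathrm{Pre}_\forall$. I expect the only mildly delicate point to be the very first step, turning ``$I_{k+1}\notin D_{k+2}$'' into ``$I_{k+1}\not\subseteq D_{k+2}$'': this is precisely where one uses that a maximal ideal contained in a downwards-closed set belongs to its canonical decomposition.
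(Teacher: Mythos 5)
Your proof is correct, and all the delicate points (turning $I_{k+1}\notin D_{k+2}$ into $I_{k+1}\not\subseteq D_{k+2}$ via maximality in the canonical decomposition, and using $I_{k+1}\subseteq\mathrm{Pre}_\forall(D_k)$ to get ${\downarrow}\mathrm{Post}_\exists(I_{k+1})\subseteq D_k$) are handled properly. The paper itself does not reprove this fact but imports it as Claim~4.2 of \citet{lazic21}, and your argument---extracting a witness successor $y\notin D_{k+1}$ of some $x\in I_{k+1}$, choosing $I\ni y$ in the decomposition of ${\downarrow}\mathrm{Post}_\exists(I_{k+1})$ and $I_k\supseteq I$ maximal in $D_k$, which is then proper since $y\notin D_{k+1}$---is essentially the same as the proof given there.
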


In the case of affine
nets, and identifying order ideals~$I$ with vectors in~$\+N^d_\omega$
with $\omega+n=\omega-n=\omega\cdot n=\omega$ for all $n$ in~$\+N$,
${\downarrow}\mathrm{Post}_\exists(I)={\downarrow}\{A\cdot(I-\vec
a)+\vec b\mid(\vec a,A,\vec b)\in\?N,I\sqsupseteq\vec a\}$.

\begin{proposition}\label{fc-an-mono}
  The descending chain $D_0\supsetneq D_1\supsetneq\cdots$ defined by
  \crefrange{eq-bc}{eq-bc-down} for a $d$-dimensional invertible affine net~$\?N$ and a target
  vector~$\vec t$ is strongly monotone.
\end{proposition}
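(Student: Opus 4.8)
The plan is to obtain strong monotonicity by combining \cref{fc-proper-tr} with a short linear-algebraic argument that, for the first time in the affine net development, really uses the invertibility hypothesis.

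Assume $I_{k+1}$ is an ideal proper at step $k+1$. \Cref{fc-proper-tr} provides an order ideal $I$ and an order ideal $I_k$ proper at step $k$ with $I_{k+1}\idealto I\subseteq I_k$. The inclusion $I\subseteq I_k$ already gives $\dim I\leq\dim I_k$, so the whole burden reduces to showing $\dim I_{k+1}\leq\dim I$. By definition of $\idealto$, the ideal $I$ occurs in the canonical decomposition of ${\downarrow}\mathrm{Post}_\exists(I_{k+1})$, which in the affine net case is the finite union ${\downarrow}\{A\cdot(I_{k+1}-\vec a)+\vec b\mid(\vec a,A,\vec b)\in\?N,\ I_{k+1}\sqsupseteq\vec a\}$; hence $I$ must equal $A\cdot(I_{k+1}-\vec a)+\vec b$ for one of the triples $(\vec a,A,\vec b)\in\?N$ with $I_{k+1}\sqsupseteq\vec a$, the affine map being computed on ideal vectors with the conventions $\omega\pm n=\omega$, $n\cdot\omega=\omega$ for $n\geq1$, and (crucially) $0\cdot\omega=0$.

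The next step is to follow the infinite components through this map. Subtracting the finite vector $\vec a$ and adding the finite vector $\vec b$ do not affect infinite components, so $\omega(I_{k+1}-\vec a)=\omega(I_{k+1})$ and, for each row $i$, we have $i\in\omega(I)$ if and only if $A(i,j)\geq1$ for some $j\in\omega(I_{k+1})$. Contrapositively, $A(i,j)=0$ for every $i\in\fin(I)$ and every $j\in\omega(I_{k+1})$: the $|\omega(I_{k+1})|$ columns of $A$ indexed by $\omega(I_{k+1})$ are supported only on the rows in $\omega(I)$, hence lie in the coordinate subspace $\{\vec x\in\+Q^d\mid \vec x(i)=0\text{ for }i\in\fin(I)\}$, which has dimension $|\omega(I)|=\dim I$. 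Since $A\in\mathsf{GL}_d(\+Q)$ its columns are linearly independent, so those $\dim I_{k+1}=|\omega(I_{k+1})|$ columns are linearly independent vectors inside a space of dimension $\dim I$, forcing $\dim I_{k+1}\leq\dim I$. Chaining the inequalities, $\dim I_{k+1}\leq\dim I\leq\dim I_k$, which is precisely what strong monotonicity requires.

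I expect the linear-algebraic observation — that invertibility forbids the dimension from dropping along $\idealto$ — to be the conceptual heart, even though it is in fact the quickest part of the argument; the transfer net $\?N_2$ of \cref{ex-an}, whose matrix maps $(\omega,\omega)$ to $(\omega,0)$, shows that it is also indispensable. The part demanding the most care when writing things out in full will instead be the ideal-level bookkeeping: checking that ${\downarrow}\mathrm{Post}_\exists(I_{k+1})$ really is the claimed finite union of principal ideals, and that the affine map acts on the representing vectors of $\+N^d_\omega$ with $0\cdot\omega=0$, so that a vanishing column entry genuinely annihilates the corresponding infinite component.
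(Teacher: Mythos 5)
Your proposal is correct and takes essentially the same route as the paper: invoke \cref{fc-proper-tr}, reduce strong monotonicity to $\dim I_{k+1}\leq\dim I$ for the triple $(\vec a,A,\vec b)$ witnessing $I_{k+1}\idealto I$, and exploit the zero block of~$A$ (rows in $\fin(I)$, columns in $\omega(I_{k+1})$) together with $A\in\mathsf{GL}_d(\+Q)$. The only cosmetic difference is that you count the linearly independent columns indexed by $\omega(I_{k+1})$ inside a coordinate subspace of dimension $\dim I$, whereas the paper dually counts the linearly independent rows indexed by $\fin(I)$; both are the same rank observation.
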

\begin{proof}
  Let $I_{k+1}$ be proper at step~$k+1$.  By \cref{fc-proper-tr},
  there exists an order ideal~$I$ and an order ideal~$I_k$ proper at
  step~$k$ such that $I_{k+1}\idealto_\?N I\subseteq I_k$.  Let us show
  that $\dim I_{k+1}\leq\dim I$; as $\dim I\leq\dim I_k$ because
  $I\subseteq I_k$, this will yield the result.

  Since $I_{k+1}\idealto_\?N I$, there exists $(\vec a,A,\vec b)$ in~$\?N$
  such that $I-\vec b=A\cdot(I_{k+1}-\vec a)$.  For this to hold, note
  that for all $i\in\fin(I)$, the $i$th row of~$A$ must be such that
  $A(i,j)=0$ for all $j\in\omega(I_{k+1})$.  As $A$ is invertible,
  those $(\fdim I)$-many rows must be linearly independent.  As just
  argued, the $j$th column for each of these rows is made of zeroes
  whenever $j\in\omega(I_{k+1})$.  Thus the remaining $(\fdim
  I_{k+1})$-many columns must make those $\fdim I$ rows linearly
  independent, hence necessarily $\fdim I_{k+1}\geq\fdim I$, i.e.,
  $\dim I_{k+1}\leq\dim I$.
\end{proof}
Observe that the proof of \cref{fc-an-mono} does not work for the
transfer net~$\?N_2$ of \cref{ex-an}:
${\footnotesize\left[\!\begin{array}{c}\omega\\\omega\end{array}\!\right]}\idealto_{\?N_2}{\footnotesize\left[\!\begin{array}{c}\omega\\0\end{array}\!\right]}$;
this is exactly the kind of non-monotone behaviour invertibility was
designed to prevent.  Also observe that
${\footnotesize\left[\!\begin{array}{c}2\\\omega\end{array}\!\right]}\idealto_{\?N_3}{\footnotesize\left[\!\begin{array}{c}\omega\\4\end{array}\!\right]}$
in the invertible affine net~$\?N_3$, which is not an
$\omega$-monotone behaviour: this illustrates the usefulness of
capturing strongly monotone descending chains, as \citep[Thm.~4.4 and
Cor.~4.6]{lazic21} do not apply.

\paragraph{Complexity Upper Bounds.} We are now equipped to analyse
the complexity of the backward coverability algorithm in invertible
affine nets.  Regarding the length~$\ell$ of the chain constructed by
the algorithm, by \cref{fc-an-ctrl,fc-an-mono} we are in the same
situation as in \cref{th-vas} and we can simply repeat the arguments
from its proof.

\begin{theorem}\label{th-an}
  The backward coverability algorithm terminates after at most
  $n^{2^{O(d)}}$ iterations on $d$-dimensional invertible affine nets
  encoded in unary when $d\geq 2$.
\end{theorem}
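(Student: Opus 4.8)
The plan is to mirror the proof of \cref{th-vas} almost verbatim, since \cref{fc-an-ctrl,fc-an-mono} place us in exactly the same hypotheses as \cref{fc-vas-ctrl} did in the VAS case. First I would let $n$ be the size of the coverability instance and assume, as before, that $n,d\geq 2$. Under the unary encoding we have $\|\?N\|\leq n$ and $\|\vec t\|\leq n$, so \cref{fc-an-ctrl} tells us that the descending chain $D_0\supsetneq D_1\supsetneq\cdots\supsetneq D_\ell=D_\ast$ produced by \crefrange{eq-bc}{eq-bc-down} is $(g,n_0)$-controlled for $g(x)\eqdef x+n$ and $n_0\eqdef n$, while \cref{fc-an-mono} tells us that it is strongly monotone. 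This is precisely the setting in which \cref{th-thin} applies.

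The next step is to invoke \cref{th-thin} to obtain $\ell\leq L_{d}+1$, where $(L_i)_{0\leq i\leq d}$ are the length bounds defined in \eqref{eqdef-Li} for exactly this control~$g$ and initial size~$n_0$. Since these are the same parameters $g(x)=x+n$ and $n_0=n$ that appear in \cref{cl-thvas}, that claim applies unchanged and yields $L_{d}+1\leq n^{3^{d}\cdot(\lg d+1)}$, which is in $n^{2^{O(d)}}$. Since $\ell$ is the number of iterations of the backward coverability algorithm, this completes the argument.

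There is no genuine obstacle here: the substantive work has already been done in \cref{lem-thin} and \cref{th-thin}, and in establishing strong monotonicity for invertible affine nets in \cref{fc-an-mono}---which is the only place where invertibility is actually used, as the remarks following that proposition make clear. The one thing worth double-checking is that nothing in the proof of \cref{cl-thvas} secretly relied on a VAS-specific property; inspecting it, the claim is a purely arithmetic consequence of the recurrences \eqref{eqdef-Ni}--\eqref{eqdef-Li} specialised to $g(x)=x+n$, $n_0=n$ and the assumption $n,d\geq 2$, so it transfers without modification. The proof of \cref{th-an} is therefore a short invocation of \cref{fc-an-ctrl,fc-an-mono,th-thin} together with the bookkeeping of \cref{cl-thvas}.
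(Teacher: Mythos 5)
Your proposal is correct and follows essentially the same route as the paper, which likewise notes that \cref{fc-an-ctrl,fc-an-mono} place the chain in exactly the situation of \cref{th-vas} and then simply repeats that proof, i.e., applies \cref{th-thin} and the arithmetic of \cref{cl-thvas} with $g(x)=x+n$ and $n_0=n$. Your extra check that \cref{cl-thvas} is purely arithmetic and VAS-independent is exactly the right sanity check, and nothing further is needed.
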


We deduce two corollaries from \cref{th-an}: one pertaining to the
complexity of the backward coverability algorithm in
dimension~$d$, which mirrors \cref{cor-bc-vas}, and one for the
coverability problem when $d$ is part of the input.  Let us start with the
backward coverability algorithm.

\begin{restatable}{corollary}{coran}\label{cor-an-bc}
  The backward coverability algorithm runs in time
  $n^{2^{O(d)}}$ on $d$-dimensional invertible affine nets encoded
  in unary when $d\geq 2$.
\end{restatable}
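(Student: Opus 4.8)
The plan is to mirror the proof of \cref{cor-bc-vas} in the invertible affine net setting, replacing the VAS-specific ingredients by their affine-net counterparts already established in this section. First I would fix an input of size $n$ and invoke \cref{th-an} to bound the number~$\ell$ of iterations of the backward coverability algorithm by $n^{2^{O(d)}}$. The remaining task is to bound the cost of a single iteration, i.e.\ the computation of the finite basis $B_{k+1}\eqdef\min_\sqsubseteq U_{k+1}$ from $B_k$ according to~\eqref{eq-bc}: this amounts to computing $\min_\sqsubseteq\mathrm{Pre}_\exists({\uparrow}\vec v)$ for each $\vec v\in B_k$, taking the union with $B_k$, and pruning non-minimal vectors. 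By the analysis underlying \cref{cl-an-ctrl} (specifically the bound $\|\vec x\|\leq\|\vec y\|$ on minimal solutions of $A\vec x\sqsupseteq\vec y$), computing a finite basis for $\mathrm{Pre}_\exists({\uparrow}\vec v)$ can be done in time polynomial in $n$, $d$, and $\|\vec v\|$, so each iteration runs in time polynomial in $n$, $d$, $|B_k|$, and $\max_{\vec v\in B_k}\|\vec v\|$.

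Next I would control the size and cardinality of the bases $B_k$. For the entry sizes, \cref{fc-an-ctrl} tells us the dual chain is $(g,n_0)$-controlled for $g(x)\eqdef x+\|\?N\|$ and $n_0\eqdef\|\vec t\|$, hence under the unary encoding for $g(x)=x+n$ and $n_0=n$; combined with \citep[Lem.~3.8]{lazic21} this yields $\|\vec v\|\leq N_d+1$ for every $\vec v\in B_k$, which by \cref{cl-thvas} is in $n^{2^{O(d)}}$. For the cardinality, the key point is that \cref{fc-an-mono} gives strong monotonicity, so \cref{cor-thin} applies and every order ideal in the canonical decomposition of $D_k\eqdef\+N^d\setminus U_k$ is thin; then \cref{prop-thin-filter} shows every $\vec v\in B_k$ is nearly thin, so $|B_k|\leq|\mathrm{Fil}^{\mathsf{thin+1}}(\+N^d)|$, and the computation~\crefrange{eq-fil}{eq-fil} from the proof of \cref{cor-bc-vas} bounds the latter by $n^{3^d\cdot(\lg d+1)}$, again in $n^{2^{O(d)}}$. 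Putting these together, the overall running time is polynomial in $\ell$, $\max_{0\leq k\leq\ell}|B_k|$, $n$, and~$d$, hence in $n^{2^{O(d)}}$. (As in \cref{cor-bc-vas}, one can also note the dual algorithm enjoys the same bound via \cref{eq-card-thin}, since $|D_k|\leq 1+L_d$.)

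I do not expect a serious obstacle: the structural work has already been done, and this corollary is essentially an assembly of \cref{th-an}, \cref{fc-an-ctrl}, \cref{fc-an-mono}, \cref{cor-thin}, \cref{prop-thin-filter}, and \cref{cl-an-ctrl}. The one point requiring a little care is the per-iteration cost bound: unlike in a VAS, where $\mathrm{Pre}_\exists({\uparrow}\vec v)$ is immediate, here one must argue that a finite basis for $\mathrm{Pre}_\exists({\uparrow}\vec v)$ under an affine transition $(\vec a,A,\vec b)$ — i.e.\ the minimal solutions of the linear system $A\vec x\sqsupseteq\vec y$ shifted by $\vec a$ — is computable in time polynomial in $n$, $d$, and the magnitude of the entries involved. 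This follows from \cref{cl-an-ctrl}, which bounds the magnitude of these minimal solutions, so that they can be enumerated within a box of side $O(n^{2^{O(d)}})$; that is the only place where the affine (rather than VAS) nature of the model visibly intervenes, and it is handled by material already in the excerpt.
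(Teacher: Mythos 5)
Your proposal is correct in outline and reaches the right bound, but it handles the one delicate step differently from the paper. The paper's proof explicitly declines to compute $\min_\sqsubseteq\mathrm{Pre}_\exists({\uparrow}\vec v)$ for each $\vec v\in B_k$, remarking that this is ``dangerously similar to a linear integer programming question''; instead it inverts the direction of the check: it enumerates the candidate nearly thin vectors $\vec v'\in\mathrm{Fil}^{\mathsf{thin+1}}(\+N^d)$ (legitimate by \cref{cor-thin,prop-thin-filter}) and, for each triple $(\vec a,A,\vec b)\in\?N$ with $\vec v'\sqsupseteq\vec a$, tests the \emph{forward} image $A\cdot(\vec v'-\vec a)+\vec b$ against $B_k$, so that each membership test is a cheap comparison and no basis of a $\mathrm{Pre}_\exists$ set ever has to be constructed. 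You instead compute these bases directly, using the bound $\|\vec x\|\leq\|\vec y\|$ on minimal solutions of $A\vec x\sqsupseteq\vec y$ from \cref{cl-an-ctrl} to confine the search to a box. That route works and yields the same $n^{2^{O(d)}}$ total, since the box has side $N_d+n+O(1)$ and hence at most $n^{2^{O(d)}}$ lattice points; what it buys is a closer parallel to the VAS proof of \cref{cor-bc-vas}, whereas the paper's enumerate-and-check avoids reasoning about minimal solution sets of linear systems altogether and transfers verbatim to the dual (ideal-based) algorithm.

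One inaccuracy to fix: your claim that a finite basis of $\mathrm{Pre}_\exists({\uparrow}\vec v)$ is computable ``in time polynomial in $n$, $d$, and $\|\vec v\|$'' (and the ensuing per-iteration bound ``polynomial in $n$, $d$, $|B_k|$, $\max_{\vec v\in B_k}\|\vec v\|$'') is false: the set of $\sqsubseteq$-minimal solutions of $A\vec x\sqsupseteq\vec y$ can have on the order of $(\|\vec y\|+1)^{d-1}$ elements, so already the output size is exponential in~$d$, and these minimal elements need not be nearly thin, so $|B_{k+1}|$ does not cap their number. Your final paragraph's box-enumeration argument is the correct statement --- per-iteration cost $(\|\vec v\|+n+1)^{O(d)}$, which is $n^{2^{O(d)}}$ by \cref{cl-thvas} --- and since the target bound tolerates this, the conclusion stands once the ``polynomial'' phrasing is replaced accordingly.
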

\begin{proof}
  \Cref{th-an} shows that the length~$\ell$ of the ascending chain
  $U_0\subsetneq U_1\subsetneq\cdots\subsetneq U_\ell=U_\ast$ constructed by
  the backward coverability algorithm is at most $L_d+1$, which
  is in $n^{2^{O(d)}}$.

  Let $B_k\eqdef\min_\sqsubseteq U_k$ denote the minimal basis at
  step~$k$. In order to compute $B_{k+1}$ as per \eqref{eq-bc}, thanks
  to \cref{cl-an-ctrl}, we could essentially argue as in the proof
  of \cref{cor-bc-vas}, with the caveat that computing bluntly
  $\min_\sqsubseteq\mathrm{Pre}_\exists({\uparrow}\vec v)$ for each
  $\vec v\in B_k$ is dangerously similar to a linear integer
  programming question and will incur an additional cost.

  Alternatively, recall from \cref{eq-fil} that
  $\mathrm{Fil}^{\mathsf{thin+1}}(\+N^d)$, the set of order filters
  ${\uparrow}\vec v$ such that $\vec v$ is nearly thin, has
  at most $n^{2^{O(d)}}$ elements, and that
  $|B_k|\leq|\mathrm{Fil}^{\mathsf{thin+1}}(\+N^d)|$
  by \cref{cor-thin,prop-thin-filter}.  Thus in order to compute
  $B_{k+1}$ one can enumerate the nearly thin vectors $\vec
  v'\in\mathrm{Fil}^{\mathsf{thin+1}}(\+N^d)$ and check for each
  $(\vec a,A,\vec b)\in\?N$ such that $\vec v'\sqsupseteq\vec a$
  whether there exists $\vec v\in B_k$ such that $A\cdot(\vec v'-\vec
  a)+\vec b\sqsupseteq\vec v$.  Each such check can be performed in
  time polynomial in~$\|\vec v'\|\leq N_{d}+1=n\cdot(L_{d-1}+2)+1$,
  $n$, $d$, and $|B_k|\leq|\mathrm{Fil}^\mathsf{thin+1}(\+N^d)|$.
  Thus the entire computation can be carried out in $n^{2^{O(d)}}$.
\ifams\par The same upper bound holds for the dual version of the backward
  coverability algorithm.  At each step~$k$, by \cref{cor-thin}, in
  order to compute $D_{k+1}$ one can enumerate the thin order ideals
  $I\in\mathrm{Idl}^\mathsf{thin}(\+N^d)$ and check for each such~$I$
  whether $I\subseteq D_k$ and $I\subseteq\mathrm{Pre}_\forall(D_k)$,
  before removing the non-maximal ones.  Note that
  $I\subseteq\mathrm{Pre}_\forall(D_k)$ if and only if
  $\mathrm{Post}_\exists(I)\subseteq D_k$, if and only if
  $\{A\cdot(I-\vec a)+\vec b\mid(\vec a,A,\vec
  b)\in\?N,I\sqsupseteq\vec a\}\subseteq D_k$, which can be checked in
  time polynomial in~$\|I\|\leq N_{d}=n\cdot(L_{d-1}+2)$, $n$, $d$,
  and $|D_k|\leq|\mathrm{Idl}^\mathsf{thin}(\+N^d)|\leq L_d+1$
  by \cref{eq-card-thin}.  The entire computation can be performed in
  time polynomial in~$L_d$, $n$, and $d$, and this remains in
  $n^{2^{O(d)}}$.\fi
\end{proof}
As VAS are a particular case of invertible affine nets, the upper
bounds in \cref{cor-an-bc} are optimal assuming the
exponential time hypothesis by~\citep[Thm.~4.2]{kunnemann23}.

\medskip
Our last result concerns the complexity of coverability in invertible
affine nets when~$d$ is part of the input.  Note that the arguments
leading to an algorithm working in space $O(d\lg(n\cdot\ell))$ in the
VAS case~\citep[Thm.~3.2]{kunnemann23}---which are essentially the
same as those used to derive a \ComplexityFont{2EXPSPACE} upper bound
for invertible affine nets in~\citep[Thm.~6]{benedikt17}---do not work
here, as the configurations along an execution of an affine net can
grow exponentially with~$\ell$.

\begin{restatable}{corollary}{than}\label{cor-an}
  The coverability problem for invertible affine nets is
  \ComplexityFont{EXPSPACE}-complete.
\end{restatable}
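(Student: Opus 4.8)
Since VAS form a subclass of invertible affine nets (take every matrix to be the identity $I_d$, which is invertible), and coverability in VAS with $d$ part of the input is \EXPSPACE-hard by \citet{lipton76}, the same lower bound transfers immediately to invertible affine nets. So the content of the corollary is the matching \EXPSPACE\ upper bound.

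**Upper bound: plan.** The plan is to run the backward coverability algorithm of \cref{sub-bc} and argue that it can be implemented in exponential space. The key quantitative inputs are already in place: by \cref{fc-an-ctrl,fc-an-mono} the descending chain $D_0\supsetneq D_1\supsetneq\cdots\supsetneq D_\ell$ produced on an invertible affine net~$\?N$ of size~$n$ is $(g,n)$-controlled for $g(x)\eqdef x+n$ and is strongly monotone, so \cref{cl-thvas} applies verbatim and gives $N_i+1\leq n^{3^i\cdot(\lg d+1)}$ and $\ell\leq L_d+1\leq n^{3^d\cdot(\lg d+1)}$. The first observation is that every vector occurring in any basis $B_k=\min_\sqsubseteq U_k$ has magnitude at most $N_d+1=n\cdot(L_{d-1}+2)+1$, hence is writable in space $O(d\lg N_d)=2^{O(d)}\cdot\mathrm{poly}(\lg n)$, which is exponential in the input size; here we crucially use \cref{cor-thin} together with \cref{prop-thin-filter} to bound the entries of basis vectors rather than the naive $g^\ell(n)$ bound, which would only be doubly exponential (and this is precisely where the \citet{benedikt17} argument broke down, as the authors note just before the statement). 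Likewise, by \cref{cor-thin} and \cref{eq-card-thin} the number of ideals in the decomposition of each $D_k$, and by \cref{prop-thin-filter} the number of vectors in each $B_k$, are bounded by $|\mathrm{Fil}^{\mathsf{thin+1}}(\+N^d)|\leq n^{3^d\cdot(\lg d+1)}$, again exponential.

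**Implementing in \EXPSPACE.** Rather than storing a whole basis $B_k$ (which has exponentially many vectors, each of exponential magnitude — that is exponential space, which is fine, but we can also be more frugal), I would proceed as in the proof of \cref{cor-an-bc}: maintain the current basis $B_k$ as an explicit list, and compute $B_{k+1}$ from $B_k$ by enumerating all nearly thin candidate vectors $\vec v'\in\mathrm{Fil}^{\mathsf{thin+1}}(\+N^d)$ and, for each triple $(\vec a,A,\vec b)\in\?N$ with $\vec v'\sqsupseteq\vec a$, testing whether $A\cdot(\vec v'-\vec a)+\vec b\sqsupseteq\vec v$ for some $\vec v\in B_k$; add $\vec v'$ to the provisional next basis exactly when such a predecessor is found (or $\vec v'\in{\uparrow}B_k$ already), then prune non-minimal vectors. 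Each arithmetic operation is on numbers of magnitude $2^{2^{O(d)}\mathrm{poly}(\lg n)}$ — still only exponentially many bits — and the loop maintains at most one extra basis plus $O(1)$ vectors at a time; a step counter up to $\ell\leq L_d+1$ also fits in exponential space. When $B_{k+1}={\uparrow}$-equals $B_k$ we stop, which happens after at most $\ell$ iterations, and finally we test $\vec s\in{\uparrow}B_\ell$. Total space used is polynomial in $d$, $\lg n$, $\lg N_d$, $\ell$ and $\max_k|B_k|$, all of which are in $2^{\mathrm{poly}(d\lg n)}$, so the whole procedure runs in \EXPSPACE.

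**Main obstacle.** The only delicate point is that a bound on the length $\ell$ and on the sizes of the objects manipulated does not by itself give an \EXPSPACE\ algorithm — one must be careful that the intermediate computation of each $B_{k+1}$ stays within exponential space and does not, e.g., require solving a linear integer program over exponentially large data whose witnesses could a priori be larger. The resolution, already indicated in \cref{cor-an-bc}, is to avoid computing $\min_\sqsubseteq\mathrm{Pre}_\exists({\uparrow}\vec v)$ directly and instead to \emph{enumerate} the nearly thin filters and test membership, since by \cref{cor-thin,prop-thin-filter} every vector that can ever appear in a minimal basis is guaranteed to be nearly thin; this turns a potentially unbounded search into a bounded enumeration over an explicitly sized set, keeping everything in \EXPSPACE. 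Combined with the \EXPSPACE-hardness inherited from VAS, this gives \EXPSPACE-completeness.
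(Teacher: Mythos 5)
Your hardness argument and your quantitative inputs (\cref{fc-an-ctrl,fc-an-mono,th-an,cl-thvas,cor-thin,prop-thin-filter}) are fine, but the space analysis of your upper bound has a genuine gap: you treat $\ell\leq L_d+1$ and $\max_k|B_k|\leq|\mathrm{Fil}^{\mathsf{thin+1}}(\+N^d)|$, both of order $n^{2^{O(d)}}=2^{2^{O(d)}\lg n}$, as if they were in $2^{\mathrm{poly}(d\lg n)}$. They are not: $3^d(\lg d+1)\lg n$ is exponential in~$d$, and since $d$ is part of the input these quantities are \emph{doubly} exponential in the input size. Consequently an implementation that stores a whole basis $B_k$ (or whose space is ``polynomial in $\ell$ and $\max_k|B_k|$'', as you write) uses doubly exponential space, i.e.\ it only re-derives a \ComplexityFont{2EXPSPACE} bound of the kind already in \citet{benedikt17} --- precisely what \cref{cor-an} is meant to improve. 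The enumeration of nearly thin filters that you borrow from \cref{cor-an-bc} is the right tool for the $n^{2^{O(d)}}$ \emph{time} bound at fixed~$d$, but it does not address the \EXPSPACE\ question, and the ``main obstacle'' you identify (avoiding an integer-programming subroutine) is not the obstacle here.

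The paper's proof takes a different route: it proves a pseudo-witness characterisation (\cref{cl-pseudo-wit}): $\vec s$ covers $\vec t$ iff there is a sequence $\vec t_0=\vec t$, $\vec t_{k+1}\in\min_\sqsubseteq\mathrm{Pre}_\exists({\uparrow}\vec t_k)$, $\vec t_{\ell'}\sqsubseteq\vec s$ with $\ell'\leq\ell$. A nondeterministic algorithm then guesses this sequence one configuration at a time, storing only the current $\vec t_k$ and a binary step counter; by \cref{cl-an-ctrl} the backward pre-images grow only \emph{additively}, so $\|\vec t_k\|\leq(\ell+1)\cdot n$, whose binary representation ($2^{O(d)}\lg n$ bits per component) fits in exponential space, as does the counter ($\lg\ell$ bits), and \EXPSPACE\ follows since nondeterministic and deterministic exponential space coincide. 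This is exactly the point of the remark preceding \cref{cor-an}: forward executions of affine nets can blow up exponentially in $\ell$, so one guesses the backward chain of minimal pre-images instead --- but one guesses a single path through it, never materialising the bases $B_k$, whose cardinality is too large for \EXPSPACE. To repair your argument you would need to replace the deterministic basis-by-basis computation with such a guess-and-check of a single pseudo-witness (or some other polynomial-space-in-one-vector traversal), keeping only \cref{th-an} for the length bound and \cref{cl-an-ctrl} for the magnitude bound.
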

\begin{proof}
  The hardness for \ComplexityFont{EXPSPACE} follows from the hardness
  of the coverability problem for VAS~\citep{lipton76}.  \ifams\par\fi
  Regarding the upper bound, consider the execution of the classical
  backward coverability algorithm as defined in \cref{eq-bc} on an
  invertible affine net~$\?N$ with target configuration~$\vec t$: this
  is an ascending chain $U_0\subsetneq U_1\subsetneq\cdots\subsetneq
  U_\ell$ where $U_\ell=U_{\ell+1}=U_\ast$.  The following
  characterisation of coverability actually holds more generally in
  WSTS.

  \begin{claim}\label{cl-pseudo-wit} In an affine net~$\?N$, $\vec s$
  covers $\vec t$ if and only if there exists $\ell'\leq\ell$ and a sequence of 
  configurations $\vec t_0,\dots,\vec t_{\ell'}$, called
  a \emph{coverability pseudo-witness}, satisfying
  \begin{align}
    \vec t_0&\eqdef\vec t\;,&
    \vec
  t_{k+1}&\in\min_{\sqsubseteq}\mathrm{Pre}_\exists({\uparrow}\vec
  t_k)\;,&
  \vec t_{\ell'}&\sqsubseteq\vec s\;.\label{eq-pseudo-wit}
  \end{align}
  \end{claim}
  \begin{claimproof}[Proof of \Cref{cl-pseudo-wit}]
    \ifams\renewcommand{\qedsymbol}{\tiny[\ref{cl-pseudo-wit}]}\fi
  If a coverability pseudo-witness exists, then we claim that for all
  $\ell'\geq k\geq 0$ there exists $\vec s_k\sqsupseteq\vec t_k$ such
  that $\vec s=\vec s_{\ell'}\to_\?N\vec
  s_{\ell'-1}\to_\?N\cdots\to_\?N\vec s_k$, and thus in particular $\vec
  s\to_\?N^\ast\vec s_{0}\geq\vec t_0$ for $k=0$.  We can check this by
  induction over~$k$.  For the base case $k=\ell'$, define $\vec
  s_{\ell'}\eqdef\vec s$.  For the induction step $k$, since
  $\vec t_{k+1}\in\mathrm{Pre}_\exists({\uparrow}\vec t_k)$ there exists
  $\vec s'_k\sqsupseteq\vec t_k$ such that $\vec t_{k+1}\to_\?N\vec
  s'_k$; by WSTS compatibility and since $\vec s_{k+1}\sqsupseteq\vec
  t_{k+1}$, there exists $\vec s_k\sqsupseteq\vec s'_k$ such that
  $\vec s_{k+1}\to_\?N\vec s_k$.

  Conversely, assume that $\vec s$ covers $\vec t$ in~$\?N$. Then
  $\vec s\in U_\ell$, and let $\ell'\leq\ell$ be the least index such
  that $\vec s\in U_{\ell'}$.  Then either $\ell'=0$, i.e., $\vec
  s\sqsupseteq\vec t=\vec t_0$ and we are done, or $\ell'>0$. Because
  $\vec s\in U_{\ell'}$ there must be
  some $\vec t_{\ell'}\in\min_{\sqsubseteq} U_{\ell'}$ with $\vec
  s\sqsupseteq\vec t_{\ell'}$, and $\vec t_{\ell'}\not\in U_{\ell'-1}$
  as otherwise $\vec s$ would be in $U_{\ell'-1}$, contradicting the
  minimality of~$\ell'$.  In general, if we have
  found a sequence $(\vec t_j)_{\ell'\geq j\geq k>0}$
  satisfying~\eqref{eq-pseudo-wit} until rank~$k+1$ included and know
  that $\vec t_{k}\in(\min_{\sqsubseteq} U_{k})\setminus U_{k-1}$,
  then either $k=1$ and $\vec
  t_1\in\min_{\sqsubseteq}\mathrm{Pre}_\exists({\uparrow}\vec t_0)$ by
  definition of $U_0$ and~$U_1$ in \eqref{eq-bc}, or $k>1$ and because
  $\vec t_k\not\in U_{k-1}$, there exists $\vec
  t_{k-1}\in\min_{\sqsubseteq} U_{k-1}$ such that
  $\vec t_k\in\min_{\sqsubseteq}\mathrm{Pre}_\exists({\uparrow}\vec
  t_{k-1})$, and $\vec t_{k-1}\not\in U_{k-2}$ as otherwise we would
  have $\vec t_k$ in~$U_{k-1}$.  Repeating this process yields a coverability
  pseudo-witness.
  \end{claimproof}
  
  By \cref{cl-pseudo-wit}, a non-deterministic algorithm for
  coverability can guess and check the existence of a coverability
  pseudo-witness.  By \cref{th-an}, such a pseudo-witness has a length
  $\ell'\leq\ell$ in $n^{2^{O(d)}}$.  Furthermore,
  by \cref{cl-an-ctrl} the components in each $\vec t_k$ in such a
  pseudo-witness are bounded by $\|\vec t\|+\|\?N\|\cdot k\leq
  (\ell+1)\cdot n$, which is still in $n^{2^{O(d)}}$.  Thus
  exponential space suffices.  Note that this also holds when we
  assume the invertible affine net to be encoded in binary, by
  substituting $2^n$ for $n$ in the bound $n^{2^{O(d)}}$.
\end{proof}

\begin{remark}[Strictly increasing affine nets]\label{rk-sian}
  Strictly increasing affine nets~\citep{valk78,finkel04,bonnet12} are
  intuitively the affine nets devoid of any form of reset or transfer;
  in \cref{ex-an}, only~$\?N_1$ is strictly increasing.  All the
  results we have proven for invertible affine nets in this
  section---namely in \cref{th-an,cor-an-bc,cor-an}---also hold for
  strictly increasing affine nets, because the descending chains of
  downwards-closed sets they generate when running the backward
  coverability algorithm are
  $\omega$-monotone. 

\begin{claim}\label{fc-sian-mono}
  The descending chain $D_0\supsetneq D_1\supsetneq\cdots$ defined by
  \crefrange{eq-bc}{eq-bc-down} for a $d$-dimensional strictly increasing affine net~$\?N$ and a target
  vector~$\vec t$ is $\omega$-monotone.
\end{claim}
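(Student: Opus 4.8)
The plan is to reduce $\omega$-monotonicity to a single application of $\idealto_\?N$, exactly as in the proof of \cref{fc-an-mono}. Suppose $I_{k+1}$ is an order ideal proper at step~$k+1$. By \cref{fc-proper-tr} there exist order ideals $I$ and $I_k$ with $I_k$ proper at step~$k$ and $I_{k+1}\idealto_\?N I\subseteq I_k$. From $I\subseteq I_k$ we immediately get $\omega(I)\subseteq\omega(I_k)$, so it suffices to prove the single-step inclusion $\omega(I_{k+1})\subseteq\omega(I)$; chaining the two inclusions then gives $\omega(I_{k+1})\subseteq\omega(I_k)$, which is precisely what $\omega$-monotonicity requires.

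For the single-step inclusion, I would unfold the affine-net description of $\idealto_\?N$: since $I$ appears in the canonical decomposition of ${\downarrow}\mathrm{Post}_\exists(I_{k+1})={\downarrow}\{A\cdot(I_{k+1}-\vec a)+\vec b\mid(\vec a,A,\vec b)\in\?N,\ I_{k+1}\sqsupseteq\vec a\}$, and the latter is the downward closure of a finite set of vectors in~$\+N^d_\omega$ whose maximal elements are its canonical ideals, the ideal $I$ must equal $A\cdot(I_{k+1}-\vec a)+\vec b$ for some triple $(\vec a,A,\vec b)\in\?N$ with $I_{k+1}\sqsupseteq\vec a$ (working with the $\omega$-arithmetic convention $\omega\pm n=\omega$, $m\cdot\omega=\omega$ for $m\geq 1$, as already used around \cref{fc-an-mono}).

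Now I exploit strict increase: every transition matrix $A$ of a strictly increasing affine net is component-wise at least $I_d$, and the only consequence I need is that $A(i,i)\geq 1$ for all $1\leq i\leq d$. Take any $i\in\omega(I_{k+1})$, i.e.\ $I_{k+1}(i)=\omega$. Since $I_{k+1}\sqsupseteq\vec a$ and $\vec a(i)$ is finite, $(I_{k+1}-\vec a)(i)=\omega$; hence the $i$th component of $A\cdot(I_{k+1}-\vec a)$ is at least the single term $A(i,i)\cdot(I_{k+1}(i)-\vec a(i))=\omega$, and adding the finite $\vec b(i)$ leaves it equal to $\omega$. Thus $I(i)=\omega$, i.e.\ $i\in\omega(I)$, which establishes $\omega(I_{k+1})\subseteq\omega(I)$ and completes the argument.

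I do not expect a real obstacle here: this is a direct diagonal-entry computation, considerably lighter than the linear-independence argument needed in the invertible case of \cref{fc-an-mono}. The only two points that demand a little care are (i) justifying that $I$ is genuinely of the form $A\cdot(I_{k+1}-\vec a)+\vec b$ — which holds because $\mathrm{Post}_\exists(I_{k+1})$ is a finite set of $\omega$-vectors and the canonical ideals of its downward closure are exactly its maximal elements — and (ii) being consistent with the $\omega$-arithmetic (in particular $\omega-n=\omega$ and $m\cdot\omega=\omega$ for $m\geq1$), which is why I phrase the diagonal contribution as a lower bound rather than an equality of the whole sum.
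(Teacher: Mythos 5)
Your proposal is correct and follows the paper's proof essentially verbatim: reduce via \cref{fc-proper-tr} to the single step $I_{k+1}\idealto_\?N I\subseteq I_k$, unfold $I=A\cdot(I_{k+1}-\vec a)+\vec b$, and use $A\sqsupseteq I_d$ to conclude $\omega(I_{k+1})\subseteq\omega(I)$. The only cosmetic difference is that you argue componentwise through the diagonal entries $A(i,i)\geq 1$, whereas the paper writes $A=I_d+A'$ with $A'\in\+N^{d\times d}$ and concludes $I\sqsupseteq I_{k+1}-\vec a$ globally — the same observation.
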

\begin{claimproof}[Proof of \cref{fc-sian-mono}]
    \ifams\renewcommand{\qedsymbol}{\tiny[\ref{fc-sian-mono}]}\fi
  Let $I_{k+1}$ be proper at step~$k+1$.  By \cref{fc-proper-tr},
  there exists an order ideal~$I$ and an order ideal~$I_k$ proper at
  step~$k$ such that $I_{k+1}\idealto_\?N I\subseteq I_k$.  Let us show
  that $\omega(I_{k+1})\subseteq\omega(I)$; as
  $\omega(I)\subseteq\omega(I_k)$ because $I\subseteq I_k$, this will
  yield the result.

  Since $I_{k+1}\idealto_\?N I$, there exists $(\vec a,A,\vec b)$ in~$\?N$
  such that $I_{k+1}\sqsupseteq\vec a$ and $I=A\cdot(I_{k+1}-\vec a)+\vec b$.
  Because~$\?N$ is strictly increasing, $A=I_d+A'$ for some matrix
  $A'\in\+N^{d\times d}$, hence $I=I_{k+1}-\vec a+A'\cdot(I_{k+1}-\vec
  a)+\vec b$.  Thus $I\sqsupseteq(I_{k+1}-\vec a)$ and therefore
  $\omega(I)\supseteq\omega(I_{k+1})$.
\end{claimproof}

 An \EXPSPACE\ upper bound was
  already shown by \citet{bonnet12} for the coverability problem, but the $n^{2^{O(d)}}$ bound
  for the problem parameterised by~$d$ is an
  improvement over the $n^{2^{O(d\lg d)}}$ bounds of \citep[Lem.~11 and
  Thm.~13]{bonnet12}, and the bounds for the backward coverability algorithm are new.\hfill\qedsymbol
\end{remark}

\bibliographystyle{abbrvnat}
\bibliography{conferences,journalsabbr,refs}

\end{document}